\newtheorem{lemma}{Lemma}[section]
\newtheorem{theorem}[lemma]{Theorem}
\newtheorem{remark}[lemma]{Remark}
\newtheorem{proposition}[lemma]{Proposition}
\newtheorem{corollary}[lemma]{Corollary}
\newtheorem{example}[lemma]{Example}
\numberwithin{equation}{section}
\title{\textsf{Generators of simple Lie superalgebras in characteristic
zero}}
\author{\textsc{ Wende Liu$^{1,2}$}\footnote{Supported by the NSF for Distinguished Young Scholars, HLJ Province (JC201004) and  the NSF
  of China (10871057)}\ \textsc{and}
    \textsc{Liming Tang$^{1,2}$}\footnote{Correspondence:  \texttt{wendeliu@ustc.edu.cn}  (W. Liu), \texttt{limingaaa2@sina.com} (L. Tang)
 }
  \\
  \small\textit{$^{1}$Department of Mathematics},
  \small\textit{Harbin Institute of Technology}\\
  \small\textit{Harbin 150006, China}\\
 \small\textit{$^{2}$School of Mathematical Sciences},
 \small \textit{Harbin Normal University} \\
 \small \textit{Harbin 150025, China}}
\date{ }
\begin{document}
\maketitle
\begin{quotation}
{\small\noindent \textbf{Abstract}:
It is shown that any finite dimensional simple Lie superalgebra over an
algebraically closed field of characteristic 0 is generated by 2
elements.

\vspace{0.05cm} \noindent{\textbf{Keywords}}: Classical Lie
superalgebra; Cartan Lie superalgebra; generator

\vspace{0.05cm} \noindent \textbf{Mathematics Subject Classification
2000}: 17B05, 17B20, 17B70}
\end{quotation}
 \setcounter{section}{-1}
\section{Introduction}
Our principal aim is to determine the minimal number of generators
for a finite-dimensional simple Lie superalgebra over an
algebraically closed field of characteristic 0. The present work is
dependent on the classification theorem  due to Kac \cite{Kac},
which states that a simple Lie superalgebra (excluding simple Lie
algebras) is isomorphic to either a   classical Lie superalgebra or
a  Cartan Lie superalgebra (see also \cite{MS}).
In 2009, Bois  \cite{MBJ} proved that a simple Lie
algebra in arbitrary characteristic $p\neq 2,3$ is generated by  2
elements. In 1976, Ionescu \cite{GIT} proved that a simple Lie
algebra $L$ over the field of  complex numbers is generated by $1.5$
elements, that is, given any nonzero $x,$ there exists $y\in L$ such
that the pair $(x,y)$ generates  $L.$ In 1951, Kuranashi \cite{GKM} proved
that a semi-simple Lie algebra in characteristic 0 is  generated
by 2 elements.

As mentioned above, all the simple Lie superalgebras split into two
series: Classical Lie superalgebras and Cartan Lie superalgebras.
The Lie algebra (even part) of a classical Lie superalgebra is
reductive and meanwhile there exists a similarity in the structure side
between the Cartan Lie superalgebras in characteristic 0 and the
simple graded Lie algebra of Cartan type in characteristic $p$. Thus,
motivated by Bois's paper \cite{MBJ} and in view of the observation
above,  we began   this work in 2009. In the process we benefit in addition
much from the literatures above, especially from \cite{MBJ}, which
contains a
 considerable amount of information in   characteristic 0 and characteristic $p$.
We also use certain information about classical Lie superalgebras from \cite{Z}.

 Throughout we work over an algebraically
closed field $\mathbb{F}$ of characteristic 0 and all the vector spaces and algebras are finite
dimensional. The \textsc{main result} is that
 \textit{any simple Lie superalgebra is generated by 2 elements}.

\section{Classical Lie superalgebras}
\subsection{Basics}
 A  classical Lie superalgebra by definition is  a simple Lie superalgebra
 for which
  the representation of its Lie algebra (its even part) on the odd part is
completely reducible \cite{Kac,MS}. Throughout  this section, we
always write $L=L_{\bar{0}}\oplus L_{\bar{1}}$ for a classical
  Lie superalgebra. Our aim is to determine the minimal
 number of generators for a classical   Lie superalgebra $L$.  The strategy is as follows.
 First, using the results in Lie algebras \cite{MBJ,GIT}, we show that the Lie algebra $L_{\bar{0}}$ is generated
by 2 elements. Then, from the structure of  semi-simple Lie algebras
and their simple modules, we prove that each classical
  Lie superalgebra
 is generated by 2 elements.

  A classical   Lie  superalgebra is determined by its Lie algebra in a sense.
\begin{proposition}\cite[p.101, Theorem 1]{MS}\label{pro-even-part-reductive}
A simple Lie superalgebra  is classical if and only if its Lie
algebra
 is reductive.
\end{proposition}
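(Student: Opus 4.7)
I would prove the two implications separately; the forward direction is a short reduction to a classical result on reductive Lie algebras, while the reverse direction is substantially more delicate.

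For the forward direction, assume $L$ is classical, so by definition $L_{\bar{1}}$ is a completely reducible $L_{\bar{0}}$-module under the adjoint action. First I would verify that this module is faithful: if $x\in L_{\bar{0}}$ satisfies $[x,L_{\bar{1}}]=0$, then $[L,x]=[L_{\bar{0}},x]\subseteq L_{\bar{0}}$, so the $L$-ideal generated by $x$ stays inside $L_{\bar{0}}$; since $L$ is simple and $L_{\bar{1}}\neq 0$, this ideal must be $0$, hence $x=0$. The conclusion then follows from the standard theorem that a finite-dimensional Lie algebra over a field of characteristic zero admitting a faithful completely reducible finite-dimensional representation is reductive.

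For the reverse direction, assume $L_{\bar{0}}$ is reductive and write $L_{\bar{0}}=\mathfrak{s}\oplus\mathfrak{z}$ with $\mathfrak{s}=[L_{\bar{0}},L_{\bar{0}}]$ semisimple and $\mathfrak{z}=Z(L_{\bar{0}})$. By Weyl's theorem, $L_{\bar{1}}$ is already completely reducible as an $\mathfrak{s}$-module, so promoting this to a decomposition of $L_{\bar{0}}$-modules reduces to showing that every $z\in\mathfrak{z}$ acts semisimply on $L_{\bar{1}}$. To achieve this I would invoke the Jordan decomposition $\mathrm{ad}\,z=D_s+D_n$ in $\mathrm{End}(L)$ and use the general fact that in characteristic zero the Jordan components of a derivation are again derivations of $L$; moreover $D_s$ and $D_n$ are even, commute with $\mathrm{ad}(L_{\bar{0}})$, and vanish on $L_{\bar{0}}$ because $z$ is central. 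The target is then $D_n=0$: one examines the generalized eigenspace decomposition of $L_{\bar{1}}$ under $\mathrm{ad}\,z$ and tries to build, from the discrepancy between the generalized and the honest eigenspaces, a proper nonzero $\mathbb{Z}_2$-graded ideal of $L$, contradicting simplicity.

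The main obstacle is exactly this last step. Neither Weyl's theorem nor representation theory of $\mathfrak{s}$ alone detects the nilpotent part of $\mathrm{ad}\,z$, so the contradiction has to be produced by exploiting the interaction of $[L_{\bar{1}},L_{\bar{1}}]\subseteq L_{\bar{0}}$ with the eigenspace filtration—using invariance of the bracket under $D_s$ and $D_n$ to show that a certain sum of generalized eigenspaces plus its brackets is a proper ideal whenever $D_n\neq 0$. The forward direction, by contrast, is essentially a one-line appeal to the characterization of reductivity via faithful completely reducible representations.
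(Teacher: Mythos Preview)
The paper does not actually prove this proposition: it is stated with a bare citation to Scheunert's monograph and is used as a black box throughout. So there is no ``paper's own proof'' to compare against; you are supplying an argument where the authors chose to quote the literature.

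Your forward implication is correct. The faithfulness step is fine once one notes that the annihilator $K=\{y\in L_{\bar 0}:[y,L_{\bar 1}]=0\}$ is itself a $\mathbb{Z}_2$-graded ideal of $L$ (it is $L_{\bar 0}$-stable by the Jacobi identity, and $[L_{\bar 1},K]=0$ by definition), hence zero by simplicity; your phrasing ``the $L$-ideal generated by $x$ stays inside $L_{\bar 0}$'' is exactly this. The appeal to the standard criterion (faithful completely reducible module $\Rightarrow$ reductive) then finishes it.

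Your reverse implication, however, is not a proof but a plan with an acknowledged hole, and that hole is the entire content of the direction. Showing that central elements of $L_{\bar 0}$ act semisimply on $L_{\bar 1}$ cannot be extracted from Weyl's theorem, and your proposed mechanism---manufacture a proper graded ideal from the nilpotent part $D_n$---does not obviously go through: neither $\ker D_n$ nor $\operatorname{Im}D_n$ is an ideal of $L$ in general (for instance $[L_{\bar 1},L_{\bar 0}]\subseteq L_{\bar 1}$ need not land in $\ker D_n$, and $[L_{\bar 1},[L_{\bar 1},\operatorname{Im}D_n]]$ need not return to $\operatorname{Im}D_n$), so the ``certain sum of generalized eigenspaces plus its brackets'' you allude to has to be specified and checked carefully. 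In Scheunert's original treatment this direction is handled by a structural argument specific to simple Lie superalgebras (using, among other things, that $[L_{\bar 1},L_{\bar 1}]=L_{\bar 0}$ and an analysis of how the radical of $L_{\bar 0}$ interacts with $L_{\bar 1}$), not by a bare Jordan-decomposition trick. If you want a self-contained proof, you should either reproduce Scheunert's argument or make the ideal construction precise; as written, the reverse direction remains a sketch.
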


   The following facts including Table 1.1 may be found  in
\cite{Kac,MS}.  The odd part $L_{\bar{1}}$ as
$L_{\bar{0}}$-module is completely reducible and $L_{\bar{1}}$   decomposes  into at most two irreducible components. By
Proposition \ref{pro-even-part-reductive}, $L_{\bar{0}}=C(L_{\bar{0}})\oplus
[L_{\bar{0}},L_{\bar{0}}].$ If the center $C(L_{\bar{0}})$ is
nonzero, then $\dim C(L_{\bar{0}})=1 $ and
$L_{\bar{1}}=L_{\bar{1}}^{1}\oplus L_{\bar{1}}^{2} $ is a direct sum
of two irreducible $L_{\bar{0}}$-submodules.
 For further information the reader is refereed to   \cite{MS, Kac}.\\

\begin{tabular}{|l|l|l|}
\multicolumn{3}{c}{\mbox{Classical Lie superalgebras (Table 1.1)}}

\\[5pt]\hline
\multicolumn{1}{|c|}{$L$}&\multicolumn{1}{|c|}{$L_{\bar{0}}$}&\multicolumn{1}{|c|}{$L_{\bar{1}}$ as $L_{\bar{0}}$-module}\\
\hline
 ${\rm{A}}(m,n),\; m,n\geq 0, n\neq m$&
${\rm{A}}_{m}\oplus {\rm{A}}_{n}\oplus
\mathbb{F}$&$~~\mathfrak{sl}_{m+1}\otimes\mathfrak{sl}_{n+1}\otimes \mathbb{F}\oplus (\mbox{its dual})$\\
\hline
 ${\rm{A}}(n,n),\;n> 0$&
${\rm{A}}_{n}\oplus {\rm{A}}_{n}$&$~~\mathfrak{sl}_{n+1}\otimes\mathfrak{sl}_{n+1}\oplus (\mbox{its dual})$\\
\hline ${\rm{B}}(m,n),\;m\geq 0,n>0$&${\rm{B}}_{m}\oplus
{\rm{C}}_{n}$&$~~\mathfrak{so}_{2m+1}\otimes
 \mathfrak{sp}_{2n}$\\
 \hline ${\rm{D}}(m,n),\;m\geq 2,n>0$&${\rm{D}}_{m}\oplus
{\rm{C}}_{n}$&~~$\mathfrak{so}_{2m}\otimes
 \mathfrak{sp}_{2n}$\\
  \hline
 ${\rm{C}}(n),\;n\geq 2$&${\rm{C}}_{n-1}\oplus \mathbb{F}$&$~~\mathfrak{csp}_{2n-2}\oplus (\mbox{its dual})$\\
 \hline
  ${\rm{P}}(n),\;n\geq 2$&${\rm{A}}_{n}$&
  $~~\Lambda^{2}\mathfrak{sl}^{*}_{n+1}\oplus{\rm{S}}^{2}\mathfrak{sl}_{n+1}$\\ \hline
 ${\rm{Q}}(n),\;n\geq 2$&
 ${\rm{A}}_{n}$& $~~{\rm{ad}}\mathfrak{sl}_{n+1}$\\ \hline
${\rm{D}}(2,1;\alpha),\;\alpha\in \mathbb{F}\setminus \{-1,0\}$&
 ${\rm{A}}_{1}\oplus
 {\rm{A}}_{1}\oplus {\rm{A}}_{1}$&
 $~~\mathfrak{sl}_{2}\otimes\mathfrak{sl}_{2}\otimes\mathfrak{sl}_{2}$\\ \hline
${\rm{G}}(3)$& $\mathfrak{G}_{2}\oplus
 {\rm{A}}_{1}$&~~$\mathfrak{G}_{2}\otimes
 \mathfrak{sl}_{2}$\\ \hline
 ${\rm{F}}(4)$&
${\rm{B}}_{3}\oplus  {\rm{A}}_{1}$&~~$\mathfrak{spin}_{7}\otimes
\mathfrak{sl}_{2}$\\ \hline
\end{tabular}

\subsection{Even parts}
Let $\mathfrak{g}$ be a semi-simple Lie algebra. Consider the  root
decomposition relative to a Cartan subalgebra $\mathfrak{h}$:
$\mathfrak{g}=\mathfrak{h}\oplus\bigoplus_{\alpha\in
\Phi}\mathfrak{g}^{\alpha}.$ For $x\in \mathfrak{g}$ we write  $
x=x_{\mathfrak{h}}+\sum_{\alpha\in\Phi} x^{\alpha}$ for  the
corresponding root space decomposition. It is well-known that \cite{H}
\begin{eqnarray}
  &&{\rm{dim}}\mathfrak{g}^{\alpha}=1 \quad  \mbox{for all}\;  \alpha\in \Phi,\label{eq1739f}\\
  &&\mathfrak{h}=\sum_{\alpha\in
\Phi}[\mathfrak{g}^{\alpha},\mathfrak{g}^{-\alpha}],\label{eq17394}\\
  && [\mathfrak{g}^{\alpha},\mathfrak{g}^{\beta}]=\mathfrak{g}^{\alpha+\beta}
  \quad\mbox{whenever}\; \alpha,\beta, \alpha+\beta\in \Phi.\label{eq0943}
\end{eqnarray}

Let $V$ be a vector space and $\mathfrak{F}:=\{f_{1},\ldots,f_{n}\}$
a finite set of non-zero linear functions on $V$. Write
$$\Omega_{\mathfrak{F}}:=\{v\in V\mid \Pi_{1\leq i\neq j\leq
n}(f_{i}-f_{j})(v)\neq 0\}.$$

\begin{lemma}\label{lemma-zarisk} Suppose $\mathfrak{F}$ is a finite  set of non-zero functions  in $V^{*}$. Then $\Omega_{\mathfrak{F}}\neq\emptyset.$ If $\mathfrak{G}\subset \mathfrak{F}$ then
     $\Omega_{\mathfrak{F}}\subset\Omega_{\mathfrak{G}}.$
\end{lemma}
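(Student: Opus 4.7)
My plan is to reduce the statement to the standard fact that a vector space over an infinite field cannot be the union of finitely many proper subspaces. First I would note that since $\mathfrak{F}=\{f_1,\ldots,f_n\}$ is written as a set, the functionals $f_i$ are pairwise distinct, hence every difference $f_i-f_j$ with $i\neq j$ is a \emph{non-zero} element of $V^{*}$, so its kernel $\ker(f_i-f_j)$ is a proper linear subspace of $V$.

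The condition $\prod_{1\leq i\neq j\leq n}(f_i-f_j)(v)\neq 0$ is equivalent to $v\notin\ker(f_i-f_j)$ for every ordered pair $i\neq j$, so
\[
\Omega_{\mathfrak{F}}\;=\;V\setminus\bigcup_{1\leq i\neq j\leq n}\ker(f_i-f_j).
\]
Since $\mathbb{F}$ has characteristic $0$ it is infinite, and a vector space over an infinite field is never the union of finitely many proper subspaces (one can argue by induction on $n$, or invoke the fact that $V$ is irreducible as an affine variety over $\mathbb{F}$). This immediately yields $\Omega_{\mathfrak{F}}\neq\emptyset$.

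For the monotonicity statement, I would just observe that if $\mathfrak{G}\subset\mathfrak{F}$, then the family of pairs $(i,j)$ with $i\neq j$ and $f_i,f_j\in\mathfrak{G}$ is a subfamily of that for $\mathfrak{F}$. Every factor appearing in the product defining $\Omega_{\mathfrak{G}}$ therefore already appears in the product defining $\Omega_{\mathfrak{F}}$, so non-vanishing of the larger product at $v$ forces non-vanishing of the smaller product at $v$, giving $\Omega_{\mathfrak{F}}\subset\Omega_{\mathfrak{G}}$.

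There is no real obstacle; the only point worth flagging is the use of infiniteness of $\mathbb{F}$ to exclude the pathological possibility that a finite union of hyperplanes exhausts $V$, which is precisely what fails over finite fields. The choice of packaging (induction on $n$ vs.\ irreducibility of affine space vs.\ a generic line argument) is a matter of taste.
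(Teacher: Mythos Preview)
Your argument is correct. The paper does not actually supply an independent proof: it cites \cite[Lemma 2.2.1]{MBJ} for the non-emptiness assertion and declares the inclusion $\Omega_{\mathfrak{F}}\subset\Omega_{\mathfrak{G}}$ to be ``straightforward.'' Your proof is thus a self-contained substitute for the citation. The reduction to the statement that a vector space over an infinite field is not a finite union of proper hyperplanes is exactly the natural content behind the cited lemma, and your observation that distinctness of the $f_i$ (as elements of a \emph{set}) forces each $f_i-f_j$ to be a non-zero functional is the one point that needs to be made explicit. The monotonicity argument you give is precisely what ``straightforward'' is meant to encode.
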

\begin{proof} The first statement is from  \cite[Lemma
2.2.1]{MBJ} and the second is straightforward.
\end{proof}
This lemma  will be usually used  in the special situation when $V$
is a Cartan subalgebra of a simple Lie superalgebra.

    An element $x$ in a semi-simple Lie algebra $\mathfrak{g}$
is called \textit{balanced} if it has no zero components with
respect to the standard decomposition of
 simple Lie algebras. If $\mathfrak{h}$ is a Cartan subalgebra of $\mathfrak{g}$,
 $x\in\mathfrak{g}$ is called $\mathfrak{h}$-\textit{balanced}
 provided that  $x^{\alpha}\neq0$ for all $\alpha\in \Phi.$

\begin{lemma}\cite{MBJ}\label{lem1142lt}
An element of  a semi-simple Lie algebra  $\mathfrak{g}$ is balanced
if and only if it is $\mathfrak{h}$-balanced for some Cartan
subalgebra  $\mathfrak{h}$.
\end{lemma}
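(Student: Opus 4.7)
The plan is to reduce to the simple case and then use an orbit/density argument on the adjoint group. First, decompose the semi-simple Lie algebra as a direct sum of simple ideals $\mathfrak{g}=\bigoplus_{i=1}^{k}\mathfrak{g}_{i}$, and observe that every Cartan subalgebra of $\mathfrak{g}$ has the form $\mathfrak{h}=\bigoplus_{i}\mathfrak{h}_{i}$ with $\mathfrak{h}_{i}=\mathfrak{h}\cap\mathfrak{g}_{i}$ a Cartan of $\mathfrak{g}_{i}$, so that the root system $\Phi$ of $\mathfrak{g}$ is the disjoint union $\bigsqcup_{i}\Phi_{i}$ of the root systems of the $\mathfrak{g}_{i}$, with matching root spaces. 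Writing $x=x_{1}+\cdots+x_{k}$ with $x_{i}\in\mathfrak{g}_{i}$, the root components of $x$ with respect to $\mathfrak{h}$ are precisely the root components of each $x_{i}$ with respect to $\mathfrak{h}_{i}$.

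The ``only if'' direction is then immediate: if $x$ is $\mathfrak{h}$-balanced for some $\mathfrak{h}=\bigoplus\mathfrak{h}_{i}$, then for every $i$ and every $\alpha\in\Phi_{i}$ one has $0\neq x^{\alpha}=x_{i}^{\alpha}\in\mathfrak{g}_{i}$, forcing $x_{i}\neq 0$ for all $i$, i.e.\ $x$ is balanced in the sense of the standard decomposition. Conversely, if $x$ is balanced, it suffices to handle each simple summand separately and then paste together the Cartan subalgebras $\mathfrak{h}_{i}\subset\mathfrak{g}_{i}$ so obtained.

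For the simple case, I would fix one Cartan subalgebra $\mathfrak{h}_{0}$ with root system $\Phi$ and root spaces $\mathfrak{g}^{\alpha}$, and let $G=\mathrm{Inn}(\mathfrak{g})$ act on $\mathfrak{g}$ via the adjoint representation. An arbitrary Cartan subalgebra has the form $\varphi(\mathfrak{h}_{0})$ with root spaces $\varphi(\mathfrak{g}^{\alpha})$, so $x$ being balanced with respect to $\varphi(\mathfrak{h}_{0})$ is equivalent to $\pi_{\alpha}\bigl(\varphi^{-1}(x)\bigr)\neq 0$ for every $\alpha\in\Phi$, where $\pi_{\alpha}:\mathfrak{g}\to\mathfrak{g}^{\alpha}$ is the projection associated to $\mathfrak{h}_{0}$. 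Define $U_{\alpha}=\{\varphi\in G:\pi_{\alpha}(\varphi^{-1}(x))\neq 0\}$, which is a Zariski open subset of the irreducible algebraic group $G$. Provided each $U_{\alpha}$ is nonempty, the finite intersection $\bigcap_{\alpha\in\Phi}U_{\alpha}$ is nonempty, and any $\varphi$ in it produces the required Cartan subalgebra $\varphi(\mathfrak{h}_{0})$.

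The crux is therefore verifying $U_{\alpha}\neq\emptyset$ for each $\alpha$, which I expect to be the main (though mild) obstacle. The argument is by contradiction: if $\pi_{\alpha}(\varphi^{-1}(x))=0$ for every $\varphi\in G$, then the whole orbit $G\cdot x$ lies in the hyperplane $\ker\pi_{\alpha}$, hence so does its linear span. But that span is a nonzero $G$-invariant subspace of $\mathfrak{g}$, hence an ideal; by simplicity it would be all of $\mathfrak{g}$, contradicting that $\ker\pi_{\alpha}$ is a proper hyperplane. This uses nothing beyond $x\neq 0$, $\mathfrak{g}^{\alpha}\neq 0$, simplicity, and the fact that $G$ acts by inner automorphisms. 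The same non-vanishing assertion can alternatively be packaged via Lemma \ref{lemma-zarisk} applied to the Cartan subalgebra, but the orbit argument seems most direct.
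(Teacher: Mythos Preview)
Your argument is correct and is in substance the same as the paper's: one produces an inner automorphism $\varphi$ with $\varphi^{-1}(x)$ balanced for a fixed Cartan $\mathfrak{h}_{0}$ and then takes $\mathfrak{h}=\varphi(\mathfrak{h}_{0})$. The only difference is that the paper obtains such a $\varphi$ by invoking the proof of \cite[Theorem~2.2.3]{MBJ}, while you spell out that argument explicitly via the Zariski-open sets $U_{\alpha}\subset G$ and irreducibility of $G$; your reduction to the simple case is a harmless extra step, since the same orbit-span-equals-ideal reasoning already gives $\mathrm{span}(G\cdot x)=\mathfrak{g}$ directly in the semi-simple case once $x$ is balanced.
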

\begin{proof} One direction is obvious. Suppose $x\in\mathfrak{g}$
is balanced and let $\mathfrak{h}'$ is a Cartan subalgebra of
$\mathfrak{g}$. From the proof of \cite[Theorem 2.2.3]{MBJ}, there
exists $\varphi\in \mathfrak{g}$ such that $\varphi(x)$ is
$\mathfrak{h}'$-balanced. Letting
$\mathfrak{h}=\varphi^{-1}(\mathfrak{h}')$, one sees that
$\mathfrak{h}$ is a Cartan subalgebra and $x$ is
$\mathfrak{h}$-balanced.
\end{proof}

For an algebra $\mathfrak{A}$ and $x, y\in \mathfrak{A}$, we write
$\langle x,y\rangle$ for  the subalgebra  generated by $x$ and $y$.
We should notice that for a Lie superalgebra $\langle x,y\rangle$ is
not necessarily a $\mathbb{Z}_2$-graded subalgebra (hence   not
necessarily a sub-Lie superalgebra). The following technical lemma will be frequently used.

\begin{lemma}\label{lemmeigenvector} Let $\mathfrak{A}$ be an
algebra. For $a\in \mathfrak{A}$ write $L_a$ for the
left-multiplication operator given by $a$. Suppose
$x=x_{1}+x_{2}+\cdots+x_{n}$ is a sum of eigenvectors of $L_a$
associated with mutually distinct eigenvalues. Then all $x_{i}$'s
lie in the subalgebra generated $\langle a,x\rangle$.
\end{lemma}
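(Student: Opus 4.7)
The plan is a standard Vandermonde / van der Monde inversion argument: by applying $L_a$ repeatedly to $x$ we obtain enough linearly independent combinations of the $x_i$ to solve for each $x_i$ separately inside $\langle a,x\rangle$.

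Let $\lambda_1,\dots,\lambda_n\in\mathbb{F}$ be the distinct eigenvalues with $L_a x_i=\lambda_i x_i$. First I would observe that for every integer $k\ge 0$,
\[
L_a^{\,k}x \;=\; \sum_{i=1}^{n}\lambda_i^{\,k}\,x_i,
\]
and that each element $L_a^{\,k}x$ lies in the subalgebra $\langle a,x\rangle$ generated by $a$ and $x$ (since it is obtained by iterated left-multiplication of $x$ by $a$). Considering $k=0,1,\dots,n-1$, the resulting $n$ elements of $\langle a,x\rangle$ are related to the $x_i$ by the $n\times n$ Vandermonde matrix $V=(\lambda_i^{\,k})_{0\le k\le n-1,\,1\le i\le n}$.

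The next step is to use that $V$ is invertible, since the $\lambda_i$ are pairwise distinct. Writing $V^{-1}=(c_{i,k})$, one obtains
\[
x_i \;=\; \sum_{k=0}^{n-1} c_{i,k}\, L_a^{\,k}x
\]
for each $i$. Since every $L_a^{\,k}x\in\langle a,x\rangle$ and $\langle a,x\rangle$ is a linear subspace, this gives $x_i\in\langle a,x\rangle$, which is what is to be proved.

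There is essentially no main obstacle here: the only subtlety is that $\mathfrak{A}$ is just an algebra (not assumed associative or unital), but the argument uses only the \emph{linear} operator $L_a$ and the fact that $\langle a,x\rangle$ is closed under left-multiplication by $a$ and under linear combinations; both properties hold for the subalgebra generated by $a$ and $x$ in any algebra. The Vandermonde determinant $\prod_{i<j}(\lambda_j-\lambda_i)$ is nonzero precisely because the eigenvalues are mutually distinct, which is exactly the hypothesis.
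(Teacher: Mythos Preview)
Your proof is correct and follows essentially the same Vandermonde inversion argument as the paper. The only cosmetic difference is that the paper applies $L_a^{k}$ for $k=1,\dots,n$ and first treats the case where all $\lambda_i\neq 0$ before reducing the general case to it, whereas you include $k=0$ (i.e.\ $x$ itself) and handle all cases uniformly; your version is in fact slightly cleaner.
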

\begin{proof} Let $\lambda_i$ be the eigenvalues of $L_a$ corresponding to
$x_i$. Suppose for a moment that all the $\lambda_i$'s are nonzero.
Then
$$
(L_a)^{k}(x)=\lambda_{1}^{k}x_{1}+\lambda_{2}^{k}x_{2}+\cdots+\lambda_{n}^{k}x_{n}\quad
\mbox{for}\; k\geq 1.
$$
Our conclusion in this case follows from the fact that the Vandermonde
 determinate given by
 $\lambda_1,\lambda_2,\ldots,\lambda_n$ is nonzero and thereby the
 general situation is clear.
\end{proof}

We write down a lemma from \cite[Theorem B and Corollory 2.2.5]{MBJ}
and the references therein,
  which is also a  consequence of Lemmas \ref{lemma-zarisk}, \ref{lem1142lt} and \ref{lemmeigenvector}.
\begin{lemma}\label{lem-semi-simple14}
Let $\mathfrak{g}$ be a semi-simple Lie algebra. If
$x\in\mathfrak{g}$ is balanced   then for a suitable Cartan
subalgebra $\mathfrak{h}$ and the corresponding root system $\Phi$
 we have  $\mathfrak{g}=\langle x, h\rangle$ for  all $h\in \Omega_{\Phi}$.
\end{lemma}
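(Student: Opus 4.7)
The plan is to combine Lemmas \ref{lem1142lt} and \ref{lemmeigenvector} using the structural facts (\ref{eq1739f})--(\ref{eq17394}) recorded at the start of the subsection. First, since $x$ is balanced, Lemma \ref{lem1142lt} furnishes a Cartan subalgebra $\mathfrak{h}$ for which $x$ is $\mathfrak{h}$-balanced; with respect to the corresponding root system $\Phi$, write $x=x_{\mathfrak{h}}+\sum_{\alpha\in\Phi}x^{\alpha}$ with every $x^{\alpha}\neq 0$.

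Next, fix any $h\in\Omega_{\Phi}$ and look at the spectral decomposition of the operator $\mathrm{ad}\,h$ on $\mathfrak{g}$: the zero eigenspace contains $\mathfrak{h}$ (and hence $x_{\mathfrak{h}}$), while each root space $\mathfrak{g}^{\alpha}$ is the $\alpha(h)$-eigenspace. By the definition of $\Omega_{\Phi}$ we have $(\alpha-\beta)(h)\neq 0$ for all distinct $\alpha,\beta\in\Phi$, and because $\Phi=-\Phi$ the choice $\beta=-\alpha$ yields $2\alpha(h)\neq 0$, so the eigenvalues $0$ and $\{\alpha(h)\}_{\alpha\in\Phi}$ are pairwise distinct. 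Hence the displayed decomposition of $x$ is a sum of $\mathrm{ad}\,h$-eigenvectors with mutually distinct eigenvalues, and Lemma \ref{lemmeigenvector} (applied with $a=h$) places each summand $x_{\mathfrak{h}}$ and $x^{\alpha}$ inside $\langle x,h\rangle$.

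To finish, I invoke the standard facts (\ref{eq1739f}) and (\ref{eq17394}). Since $\dim\mathfrak{g}^{\alpha}=1$ and $x^{\alpha}\neq 0$, the vector $x^{\alpha}$ spans $\mathfrak{g}^{\alpha}$, so $\mathfrak{g}^{\alpha}\subseteq\langle x,h\rangle$ for every $\alpha\in\Phi$. Then $[x^{\alpha},x^{-\alpha}]\in\langle x,h\rangle$ spans $[\mathfrak{g}^{\alpha},\mathfrak{g}^{-\alpha}]$, and summing over $\alpha$ together with (\ref{eq17394}) gives $\mathfrak{h}\subseteq\langle x,h\rangle$. Combined with the inclusions of the root spaces, this yields $\mathfrak{g}=\mathfrak{h}\oplus\bigoplus_{\alpha}\mathfrak{g}^{\alpha}\subseteq\langle x,h\rangle$, hence equality.

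The whole argument is bookkeeping once Lemmas \ref{lem1142lt} and \ref{lemmeigenvector} are in hand; the only point that is not purely mechanical is the small observation that $h\in\Omega_{\Phi}$ already forces $\alpha(h)\neq 0$ for every root, which is not built into the definition of $\Omega_{\Phi}$ but is rescued by the symmetry $\Phi=-\Phi$. This is the only place where the semi-simplicity (as opposed to merely reductive) hypothesis is genuinely used, via the existence of a root space decomposition with $\Phi$ stable under negation.
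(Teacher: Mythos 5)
Your proof is correct and follows exactly the route the paper indicates: it cites Bois and remarks that the lemma is "a consequence of Lemmas \ref{lemma-zarisk}, \ref{lem1142lt} and \ref{lemmeigenvector}," and your argument is precisely that derivation spelled out, including the one non-trivial observation that $h\in\Omega_{\Phi}$ forces $\alpha(h)\neq 0$ via $\Phi=-\Phi$, so that the eigenvalues $0$ and $\alpha(h)$ are pairwise distinct before applying Lemma \ref{lemmeigenvector}. No gaps.
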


Denote by $\Pi:=\{\alpha_{1},\ldots,\alpha_{n}\}$ the system of
simple roots of a semi-simple Lie algebra $\mathfrak{g}$ relative to
a Cartan subalgebra $\mathfrak{h}$. As above, $x\in\mathfrak{g}$ is
refereed to as $\Pi$-balanced if $x$ is a sum of all the simple-root
vectors, that is, $x=\sum_{\alpha\in\Pi}x^{\alpha},$ where
$x^{\alpha}$ is a root vector of $\alpha$. Recall that
$\Omega_{\Pi}\neq\emptyset $ by Lemma \ref{lemma-zarisk}.
\begin{corollary}\label{lem-simple roots}
 A semi-simple Lie algebra $\mathfrak{g}$ is generated by a $\Pi$-balanced element and an
  element in $\Omega_{\Pi}$.
\end{corollary}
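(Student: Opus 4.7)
The plan is to extract the individual simple-root summands of $x$ via the eigenvector lemma, then propagate through the root system to every root space, and finally recover the Cartan subalgebra. Write $x=\sum_{\alpha\in\Pi}x^{\alpha}$ with $0\ne x^{\alpha}\in\mathfrak{g}^{\alpha}$, and fix $h\in\Omega_{\Pi}$.

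First, each summand $x^{\alpha}$ is an eigenvector of $\mathrm{ad}(h)$ with eigenvalue $\alpha(h)$. The defining condition of $\Omega_{\Pi}$ is precisely that the weights $\{\alpha(h):\alpha\in\Pi\}$ are pairwise distinct, so applying Lemma \ref{lemmeigenvector} with $\mathfrak{A}=\mathfrak{g}$ and $a=h$ yields $x^{\alpha}\in\langle x,h\rangle$ for every $\alpha\in\Pi$.

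Second, I would use (\ref{eq0943}) inductively: since every root of $\Phi$ is obtained from simple roots by a sequence of additions that remains inside $\Phi$ at each step, repeated bracketing of the simple-root vectors extracted above yields $x^{\beta}\in\langle x,h\rangle$ for every $\beta\in\Phi$. Then (\ref{eq17394}) gives $\mathfrak{h}=\sum_{\alpha\in\Phi}[\mathfrak{g}^{\alpha},\mathfrak{g}^{-\alpha}]\subseteq\langle x,h\rangle$, and together with (\ref{eq1739f}) this exhausts $\mathfrak{g}$.

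The decisive step is the first one: the raison d'être of $\Omega_{\Pi}$ is exactly to supply the eigenvalue separation required by Lemma \ref{lemmeigenvector}, so that the generator $x$, although given as a single bundled sum, in effect hands over each of its simple-root components to the subalgebra $\langle x,h\rangle$. Once these components are available, the rest is routine root-system bookkeeping based on (\ref{eq17394})--(\ref{eq0943}), and the corollary follows as advertised from Lemma \ref{lem-semi-simple14}.
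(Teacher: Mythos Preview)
Your argument follows the paper's own proof exactly: the paper also simply cites Lemma~\ref{lemmeigenvector} together with (\ref{eq1739f})--(\ref{eq0943}). In that sense you have reconstructed the intended reasoning.

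There is, however, a genuine gap in your second step (and the paper's one-line sketch shares it). The assertion that ``every root of $\Phi$ is obtained from simple roots by a sequence of additions that remains inside $\Phi$'' holds only for \emph{positive} roots: any iterated bracket of the vectors $x^{\alpha}$, $\alpha\in\Pi$, lies in $\mathfrak{n}^{+}=\bigoplus_{\beta>0}\mathfrak{g}^{\beta}$, so (\ref{eq0943}) never produces a negative root space. Consequently you cannot invoke (\ref{eq17394}), since you have no $\mathfrak{g}^{-\alpha}$ available; in fact $\langle x,h\rangle\subseteq\mathbb{F}h+\mathfrak{n}^{+}$, a proper subalgebra. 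Already for $\mathfrak{g}=\mathfrak{sl}_{2}$ with $x=e$ and any $h\in\mathfrak{h}$ the pair generates only the Borel. The statement becomes correct if $x$ is taken to be $(\Pi\cup(-\Pi))$-balanced, i.e.\ $x=\sum_{\alpha\in\Pi}(x^{\alpha}+x^{-\alpha})$, and $h\in\Omega_{\Pi\cup(-\Pi)}$; then your argument goes through verbatim, since the Chevalley generators $\{x^{\pm\alpha}:\alpha\in\Pi\}$ do generate $\mathfrak{g}$ via (\ref{eq0943}) and (\ref{eq17394}).
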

\begin{proof}
This is a consequence of  Lemma \ref{lemmeigenvector} and the
facts   (\ref{eq1739f}), (\ref{eq17394}) and (\ref{eq0943}).
\end{proof}

\begin{proposition}\label{pro-even-part-generators} The Lie algebra of a classical
  Lie superalgebra is generated by 2 elements.
\end{proposition}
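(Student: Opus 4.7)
The strategy is to reduce the proposition to Corollary \ref{lem-simple roots} by treating separately the two shapes the even part can take. By Proposition \ref{pro-even-part-reductive}, $L_{\bar 0}$ is reductive, and the discussion preceding Table 1.1 shows that $C(L_{\bar 0})$ is either zero or one-dimensional. If $C(L_{\bar 0})=0$, then $L_{\bar 0}$ is semi-simple and Corollary \ref{lem-simple roots} already exhibits a two-element generating set.

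The only real work is the case $\dim C(L_{\bar 0})=1$; write $C(L_{\bar 0})=\mathbb{F}z$ and $\mathfrak{g}=[L_{\bar 0},L_{\bar 0}]$, so that $L_{\bar 0}=\mathfrak{g}\oplus\mathbb{F}z$ with $\mathfrak{g}$ semi-simple. I will fix a Cartan subalgebra $\mathfrak{h}\subseteq\mathfrak{g}$, simple root system $\Pi$, a $\Pi$-balanced element $x\in\mathfrak{g}$, and (using Lemma \ref{lemma-zarisk} to ensure $\Omega_\Pi\neq\emptyset$) an element $h\in\Omega_\Pi$. The proposed generating pair for $L_{\bar 0}$ is $(x,\,h+z)$.

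To see this works, I exploit that $z$ is central in $L_{\bar 0}$, so the operators $L_{h+z}$ and $L_h$ coincide on $L_{\bar 0}$. Hence the eigendecomposition of $x$ with respect to $L_{h+z}$ is identical to that with respect to $L_h$, and Lemma \ref{lemmeigenvector} extracts each simple-root vector $x^{\alpha_i}$ inside $\langle x,\,h+z\rangle$. The identities (\ref{eq1739f})--(\ref{eq0943}) then rebuild $\mathfrak{h}$ and every remaining root space, exactly as in the proof of Corollary \ref{lem-simple roots}, giving $\mathfrak{g}\subseteq\langle x,\,h+z\rangle$. Since $h\in\mathfrak{g}$ is now in this subalgebra, so is $z=(h+z)-h$, and therefore $L_{\bar 0}=\langle x,\,h+z\rangle$.

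The only genuine obstacle is absorbing the one-dimensional center without spending a generator; the perturbation trick $h\mapsto h+z$ achieves this at no cost because $z$ is invisible in every commutator used to reconstruct $\mathfrak{g}$, yet is recoverable once $h$ has been produced.
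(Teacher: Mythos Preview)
Your proof is correct and lands on the same generating pair $(x,\,h+z)$ as the paper, but the justification in the one-dimensional-center case follows a different path. The paper argues indirectly: it applies the projection $\pi:L_{\bar 0}\to[L_{\bar 0},L_{\bar 0}]$, observes that $\pi(\langle x,h+z\rangle)=[L_{\bar 0},L_{\bar 0}]$, and then rules out the possibility $\dim\langle x,h+z\rangle=\dim[L_{\bar 0},L_{\bar 0}]$ by a contradiction exploiting that a semi-simple algebra equals its own derived algebra. Your route is more direct: because $z$ is central, $\mathrm{ad}(h+z)=\mathrm{ad}(h)$, so the Vandermonde extraction of Lemma~\ref{lemmeigenvector} and the subsequent reconstruction of $\mathfrak{g}$ run verbatim with $h+z$ in place of $h$; once $\mathfrak{g}$ is inside the subalgebra, $z=(h+z)-h$ falls out for free. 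This is cleaner and makes transparent exactly why the perturbation costs nothing. One small discrepancy worth noting: the paper's proof of this proposition invokes Lemma~\ref{lem-semi-simple14} (fully balanced elements, $h\in\Omega_\Phi$) rather than Corollary~\ref{lem-simple roots} ($\Pi$-balanced elements, $h\in\Omega_\Pi$); your argument transfers unchanged to that setting as well, since the centrality observation is independent of which generating lemma for the semi-simple part one starts from.
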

\begin{proof}
Let $L=L_{\bar{0}}\oplus L_{\bar{1}}$ be a classical   Lie
superalgebra. By Proposition \ref{pro-even-part-reductive},
$L_{\bar{0}}$ is reductive, that is,  $[L_{\bar{0}},L_{\bar{0}}]$ is
semi-simple and
\begin{equation}\label{eq1853}
L_{\bar{0}}=C(L_{\bar{0}})\oplus[L_{\bar{0}},L_{\bar{0}}].
\end{equation}
If $C(L_{\bar{0}})=0$, the conclusion follows immediately from Lemma
\ref{lem-semi-simple14}. If $C(L_{\bar{0}})$ is nonzero, then
$C(L_{\bar{0}})=\mathbb{F}z$ is 1-dimensional. Choose a balanced
element $x\in [L_{\bar{0}},L_{\bar{0}}]$. By Lemma
\ref{lem-semi-simple14}, there exists $h\in
[L_{\bar{0}},L_{\bar{0}}]$ such that
$[L_{\bar{0}},L_{\bar{0}}]=\langle x,h\rangle.$ Claim that
$$L_{\bar{0}}=\langle x,h+z\rangle.$$
Indeed, considering
 the projection  of $L_{\bar{0}}$ onto $[L_{\bar{0}},L_{\bar{0}}] $ with respect to the decomposition
 (\ref{eq1853}), denoted by $\pi$, which is a homomorphism of Lie
algebras, we  have
$$\pi(\langle x,h+z\rangle)=\langle \pi(x),\pi(h+z)\rangle=\langle x,h\rangle=[L_{\bar{0}},L_{\bar{0}}].$$
Hence only two possibilities  might occur: $\langle
x,h+z\rangle=L_{\bar{0}}$ or ${\rm{dim}}\langle
x,h+z\rangle={\rm{dim}}[L_{\bar{0}},L_{\bar{0}}].$ The first case is
the desired. Let us show that the second does not occur. Assume the
contrary. Then
 $\pi$ restricting to $\langle x,h+z\rangle$ is an isomorphism and
 thereby
 $\langle x,h+z\rangle$ is semi-simple. Thus
$$\langle x,h+z\rangle=[\langle x,h+z\rangle,\langle x,h+z\rangle]=
[\langle x,h\rangle,\langle x,h\rangle]=\langle x,h\rangle.$$ Hence
  $h\in \langle
x,h+z\rangle$. It follows that
$$z\in \langle
x,h+z\rangle=\langle x,h\rangle=[L_{\bar{0}},L_{\bar{0}}],$$
contradicting  (\ref{eq1853}).
\end{proof}
\begin{remark}\label{remarkgln}
By Corollary \ref{lem-simple roots}, $\mathfrak{sl}(n)$ is generated
by a $\Pi$-balanced element $x$ and an
  element $y$ in $\Omega_{\Pi}$. As in the proof of Proposition
 \ref{pro-even-part-generators},
 one may prove that
$\mathfrak{gl}(n)$ is generated by $h$ and $x+z$,   where $z$ is a
nonzero central element in $\mathfrak{gl}(n)$.
\end{remark}

\subsection{Classical   Lie superalgebras}
Suppose $L$ is a  classical   Lie superalgebra with the standard Cartan subalgebra $H $. The corresponding weight (root) space decompositions are
\begin{eqnarray}
&& L_{\bar{0}}=H\oplus
\bigoplus_{\alpha\in \Delta_{\bar{0}}}L_{\bar{0}}^{\alpha},\qquad
L_{\bar{1}}=\bigoplus_{\beta\in
\Delta_{\bar{1}}}L_{\bar{1}}^{\beta};\nonumber\\
&& L=H\oplus \bigoplus_{\alpha\in
\Delta_{\bar{0}}}L_{\bar{0}}^{\alpha}\oplus\bigoplus_{\beta\in
\Delta_{\bar{1}}}L_{\bar{1}}^{\beta}.\label{eq-root-space-decompose}
\end{eqnarray}
Every $x\in L$ has a unique  decomposition with respect to (\ref{eq-root-space-decompose}):
\begin{equation}\label{eq-elememt-decompose}
x=x_{H}+\sum_{\alpha\in
\Delta_{\bar{0}}}x_{\bar{0}}^{\alpha}+\sum_{\beta\in
\Delta_{\bar{1}}}x_{\bar{1}}^{\beta},
\end{equation}
  where $x_{H}\in H,$ $x_{\bar{0}}^{\alpha}\in L_{\bar{0}}^{\alpha}$,
   $x_{\bar{1}}^{\beta}\in L_{\bar{1}}^{\beta}$.
Write
   $$
   \Delta:=\Delta_{\bar{0}}\cup \Delta_{\bar{1}}\quad\mbox{and}\quad L^{\gamma}:=L_{\bar{0}}^{\gamma}\oplus L_{\bar{1}}^{\gamma}\quad\mbox{for}\quad\gamma\in \Delta.
   $$
   Note that the standard  Cartan subalgebra of a classical Lie superalgebra is diagonal:
\begin{equation}\label{eq-classical-root-vector}
 \mathrm{ad}h(x)=\gamma(h)x \quad \mbox{for all}\;\; h\in H,\; x\in L^{\gamma}, \;\gamma\in \Delta.
 \end{equation}
 For $x\in L$, write
\begin{equation}
\mathbf{supp}(x):=\{\gamma\in\Delta \mid x_{\gamma}\neq
0\}.\label{eq-component}
\end{equation}
For $x=x_{\bar{0}}+x_{\bar{1}}\in L,$
$$\mathbf{supp}(x)=\mathbf{supp}(x_{\bar{0}})\cup\mathbf{supp}(x_{\bar{1}}).$$
\begin{lemma}\label{lem-weight-information}
~~~
\begin{itemize}
\item[$\mathrm{(1)}$] If $L\neq {\rm{Q}}(n)$ then $0\notin
\Delta_{\bar{1}}$ and $\Delta_{\bar{0}}\cap\Delta_{\bar{1}}=\emptyset.$

\item[$\mathrm{(2)}$]  If $L=
{\rm{Q}}(n)$ then $\Delta_{\bar{1}}=\{0\}\cup\Delta_{\bar{0}}.$

\item[$\mathrm{(3)}$]  If $L\neq\mathrm{A}(1,1)$, $\mathrm{Q}(n)$ or
$\mathrm{P}(3)$ then $\mathrm{dim}L^{\gamma}=1$ for every $\gamma\in
\Delta.$

\item[$\mathrm{(4)}$] Suppose $L={\rm{A}}(m,n) $, ${\rm{A}}(n,n), {\rm{C}}(n)$ or
${\rm{P}}(n),$ where $m\neq n$.

\begin{itemize}
\item[$\mathrm{(a)}$]
$L_{\bar{1}}=L_{\bar{1}}^{1}\oplus L_{\bar{1}}^{2}$ is a direct
sum of two irreducible $L_{\bar{0}}$-submodules.
 \item[$\mathrm{(b)}$] Let $\Delta_{\bar{1}}^{i}$ be  the  weight set of $L_{\bar{1}}^{i}$
relative to $H,$ $i=1,2$. Then there exist $\alpha_{\bar{1}}^{i}\in
 \Delta_{\bar{1}}^{i}$  such that
 $\alpha_{\bar{1}}^{1}\neq\alpha_{\bar{1}}^{2}.$
\end{itemize}
\end{itemize}
\end{lemma}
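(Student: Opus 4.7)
The plan is to reduce each of the four assertions to a case-by-case inspection of the classical Lie superalgebras listed in Table 1.1, relying on the explicit realizations given in \cite{Kac,MS,Z}. In each case $H$ is the standard Cartan subalgebra (diagonal matrices in the matrix realizations, or the Cartan of the even part in the exceptional cases), so $\Delta_{\bar{0}}$ is the root system of the reductive algebra $L_{\bar{0}}$ read off from the second column of Table 1.1, and $\Delta_{\bar{1}}$ is the set of $H$-weights occurring in the $L_{\bar{0}}$-module $L_{\bar{1}}$ described in the third column.

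For (1) and (2) the key contrast is the following structural feature of $Q(n)$. In the standard realization one has $L_{\bar{0}}\cong\mathfrak{sl}_{n+1}$, while $L_{\bar{1}}$ is isomorphic as an $L_{\bar{0}}$-module to the adjoint representation $\mathrm{ad}\,\mathfrak{sl}_{n+1}$; hence $\Delta_{\bar{1}}=\{0\}\cup\Delta_{\bar{0}}$, which is (2). For every other family in Table 1.1, one checks that the odd part is carried by off-diagonal matrix blocks, or by antisymmetric/symmetric tensor factors, on which $H$ acts by weights of the form $\varepsilon_i\pm\delta_j$ (or the analogues for the exceptional types) with $\varepsilon_i$ coming from one simple factor of $L_{\bar{0}}$ and $\delta_j$ from the other. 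Such weights are neither zero nor of the form $\varepsilon_i-\varepsilon_j$ or $\delta_i-\delta_j$, and so are disjoint from $\Delta_{\bar{0}}$; this is (1). For (3), the same explicit description shows that outside the three exceptions each weight space $L^{\gamma}$ is one-dimensional, while for $A(1,1)$ the ``too small'' Cartan forces multiplicities in $L_{\bar{1}}$, for $Q(n)$ the zero weight of $L_{\bar{1}}$ has multiplicity $\dim H=n$, and for $P(3)$ the components $\Lambda^{2}\mathfrak{sl}_{4}^{*}$ and $S^{2}\mathfrak{sl}_{4}$ contribute repeated weights.

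For (4), part (a) in the cases $A(m,n)$ with $m\neq n$ and $C(n)$ is exactly the remark recorded just after Proposition \ref{pro-even-part-reductive}: the center $C(L_{\bar{0}})$ is one-dimensional, and the eigenvalues of a generator of $C(L_{\bar{0}})$ on $L_{\bar{1}}$ split it into two isotypical (hence, by complete reducibility and the known dimension count, two irreducible) $L_{\bar{0}}$-submodules; for $A(n,n)$ and $P(n)$ the two-component decomposition is read off directly from the third column of Table 1.1. For (b), in each of the listed cases $L_{\bar{1}}^{1}$ and $L_{\bar{1}}^{2}$ are non-isomorphic irreducible $L_{\bar{0}}$-modules (duals of each other for $A$- and $C$-type, and non-isomorphic Schur functors for $P(n)$), so $\Delta_{\bar{1}}^{1}\neq\Delta_{\bar{1}}^{2}$, and it suffices to take, say, the highest weight of $L_{\bar{1}}^{1}$ and a weight of $L_{\bar{1}}^{2}$ distinct from it. The main obstacle, and the only real work, is the tedious bookkeeping in the small and exceptional cases $A(1,1)$, $P(3)$, $D(2,1;\alpha)$, $G(3)$, $F(4)$, and $Q(n)$, where the generic pattern breaks down and the assertions must be inspected by hand from the explicit root data.
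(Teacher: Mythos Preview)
Your proposal is correct and follows essentially the same approach as the paper: both reduce everything to the known structural data of the classical superalgebras from \cite{Kac,MS}. The paper is simply terser---it dispatches (1)--(3) by a direct citation to \cite[Proposition~1, p.~137]{MS} and reads (4)(a) off Table~1.1, while for (4)(b) it invokes exactly your contragredience observation for ${\rm A}(m,n)$, ${\rm A}(n,n)$, ${\rm C}(n)$ and, for ${\rm P}(n)$, writes down the explicit weights $-\varepsilon_{1}-\varepsilon_{2}\in\Delta_{\bar 1}^{1}$ and $2\varepsilon_{1}\in\Delta_{\bar 1}^{2}$ rather than appealing to non-isomorphism of the Schur functors. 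Your sketch of (1) is slightly imprecise in that the pattern ``$\varepsilon_i\pm\delta_j$ with the two letters coming from different simple factors of $L_{\bar 0}$'' does not literally cover ${\rm P}(n)$ (where $L_{\bar 0}$ has a single simple factor and the odd weights are $\pm(\varepsilon_i+\varepsilon_j)$, $2\varepsilon_i$), but this is a cosmetic omission rather than a gap.
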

\begin{proof}
(1), (2) and (3) follow   from \cite[Proposition 1, p.137]{MS}.
  (4)(a) follows   from Table 1.1.  Let us consider (4)(b).
 For $L={\rm{A}}(m,n), {\rm{A}}(n,n)$ or
${\rm{C}}(n),$ it follows from the fact that   $L_{0}$-modules
$L_{-1}$ and $L_{1}$ are contragradient. For $L= {\rm{P}}(n),$ a
direct computation  shows that $-\varepsilon_{1}-\varepsilon_{2}\in
\Delta_{\bar{1}}^{1}$ and $2\varepsilon_{1}\in
\Delta_{\bar{1}}^{2}.$
 \end{proof}

\begin{theorem}\label{th-classical}
A  classical   Lie superalgebra is generated by 2 elements.
\end{theorem}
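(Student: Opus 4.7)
The plan is to construct a generating pair $(a,b)$ where $b$ is a suitably generic element of the standard Cartan subalgebra $H$ and $a=x_{\bar{0}}+u$ combines an even ingredient $x_{\bar{0}}$ that, together with $b$, generates $L_{\bar{0}}$ in the manner of Proposition~\ref{pro-even-part-generators} with an odd ingredient $u$ whose $L_{\bar{0}}$-submodule is all of $L_{\bar{1}}$. The tool that glues these two ``halves'' together is the Vandermonde argument of Lemma~\ref{lemmeigenvector}: once $b$ is chosen so that $\mathrm{ad}(b)$ separates every weight summand of $a$, each of those summands individually lies in $\langle a,b\rangle$, reducing the theorem to generation of $L_{\bar{0}}$ plus module-generation of $L_{\bar{1}}$.

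I would begin with the generic case $L\notin\{\mathrm{Q}(n),\mathrm{A}(1,1),\mathrm{P}(3)\}$. By Lemma~\ref{lem-weight-information}(1),(3) the functionals in $\Delta=\Delta_{\bar{0}}\cup\Delta_{\bar{1}}$ are pairwise distinct, $0\notin\Delta_{\bar{1}}$, and each $L^{\gamma}$ is one-dimensional; Lemma~\ref{lemma-zarisk} then gives $H\cap\Omega_{\Delta}\neq\emptyset$, and I pick $b$ there (replaced by $b+z$ if $C(L_{\bar{0}})\neq 0$, exactly as in the proof of Proposition~\ref{pro-even-part-generators}). Choose $x_{\bar{0}}\in[L_{\bar{0}},L_{\bar{0}}]$ a $\Pi$-balanced element in the sense of Corollary~\ref{lem-simple roots}. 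For the odd ingredient, invoke Lemma~\ref{lem-weight-information}(4) to produce weight vectors $u_{1}\in L_{\bar{1}}^{\alpha_{\bar{1}}^{1}}$, $u_{2}\in L_{\bar{1}}^{\alpha_{\bar{1}}^{2}}$ coming from the two irreducible $L_{\bar{0}}$-summands of $L_{\bar{1}}$ (only one such $u$ is needed when $L_{\bar{1}}$ is itself irreducible). Set $a:=x_{\bar{0}}+u_{1}+u_{2}$.

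Applying Lemma~\ref{lemmeigenvector} to $\mathrm{ad}(b)$ acting on $a$ then places every weight summand of $a$ inside $\langle a,b\rangle$, because $b\in\Omega_{\Delta}$ ensures the eigenvalues $\{\alpha(b)\}_{\alpha\in\Pi}\cup\{\alpha_{\bar{1}}^{i}(b)\}_{i=1,2}$ are pairwise distinct. The even summands together with $b$ generate $L_{\bar{0}}$ by Corollary~\ref{lem-simple roots} and the centre-bookkeeping trick in Proposition~\ref{pro-even-part-generators}; meanwhile each nonzero $u_{i}$ lies in the irreducible $L_{\bar{0}}$-module $L_{\bar{1}}^{i}$, so bracketing $u_{i}$ against $L_{\bar{0}}\subset\langle a,b\rangle$ recovers all of $L_{\bar{1}}^{i}$, and hence $\langle a,b\rangle=L$.

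The main obstacle is the exceptional list $\mathrm{Q}(n),\mathrm{A}(1,1),\mathrm{P}(3)$. For $\mathrm{Q}(n)$, Lemma~\ref{lem-weight-information}(2) gives $\Delta_{\bar{1}}=\{0\}\cup\Delta_{\bar{0}}$, so $\mathrm{ad}(b)$ cannot distinguish an odd root vector from the even one of the same weight; the remedy is to place the odd ingredient at the \emph{zero} weight (nonzero because $L_{\bar{1}}\cong\mathrm{ad}\,\mathfrak{sl}_{n+1}$), whose eigenvalue $0$ is separated from the nonzero $\alpha(b)$'s of the simple-root summands, and then appeal to irreducibility of $L_{\bar{1}}$ as an $L_{\bar{0}}$-module. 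For $\mathrm{A}(1,1)$ and $\mathrm{P}(3)$, some root spaces have dimension $>1$, so Lemma~\ref{lemmeigenvector} extracts at most one vector per weight and one must check by hand, using the explicit root data and the $L_{\bar{0}}$-module structure from Table~1.1, that the extracted odd vectors still $L_{\bar{0}}$-generate $L_{\bar{1}}$. This case-by-case verification is the technical heart of the argument.
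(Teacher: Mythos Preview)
Your strategy---combine an even generating ingredient with odd weight vectors from each irreducible summand of $L_{\bar 1}$, separate the summands via the Vandermonde lemma, then invoke irreducibility---is exactly the paper's. The implementation differs only in details: the paper takes $x_{\bar 0}$ fully balanced (support all of $\Delta_{\bar 0}$) rather than $\Pi$-balanced, and in the centred case places $z$ into $x$ rather than into $h$, recovering $z$ afterwards via $\mathfrak{h}\subset\langle x,h\rangle$ from (\ref{eq17394}). For $\mathrm{Q}(n)$ the paper puts the odd vector at the non-simple even root $\delta_1+\delta_2$; Vandermonde then returns the mixed piece $x_{\bar 0}^{\delta_1+\delta_2}+x_{\bar 1}^{\delta_1+\delta_2}$, and one peels off the even half as a multiple of $[x_{\bar 0}^{\delta_1},x_{\bar 0}^{\delta_2}]$, already in hand---your zero-weight alternative is a clean variant of this trick. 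Finally, the paper does \emph{not} single out $\mathrm{A}(1,1)$ or $\mathrm{P}(3)$: it never uses $\dim L^\gamma=1$, only the irreducibility of each $L_{\bar 1}^i$ together with Lemma~\ref{lem-weight-information}(4)(b) to pick distinct $\alpha_{\bar 1}^1\neq\alpha_{\bar 1}^2$, so from the paper's standpoint the case analysis you flag is not needed.
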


\begin{proof}
 Let $L=L_{\bar{0}}\oplus L_{\bar{1}}$
  be a classical Lie superalgebra.\\

  \noindent \textit{Case 1}. Suppose ${\rm{dim}}C(L_{\bar{0}})=1.$ In this case
  $L= {\rm{C}}(n) $ or ${\rm{A}}(m,n)$ with $m\neq n$ (see Table 1.1).  Then
$L_{\bar{1}}=L_{\bar{1}}^{1}\oplus L_{\bar{1}}^{2}$ is a direct sum
of two irreducible $L_{\bar{0}}$-submodules and
$[L_{\bar{0}},L_{\bar{0}}]$ is simple or a direct sum of  two simple
Lie algebras.
 Let $x_{\bar{0}}$ be a balanced element in
$[L_{\bar{0}},L_{\bar{0}}].$ From Lemma \ref{lem1142lt}, there
exists a Cartan subalgebra $\mathfrak{h}$ of
$[L_{\bar{0}},L_{\bar{0}}]$ such that
 ${\bf{supp}}(x_{\bar{0}})=\Delta_{\bar{0}},$ the latter is viewed as the root system relative to $\mathfrak{h}.$ By Lemma
 \ref{lem-semi-simple14}, we have
  $[L_{\bar{0}},L_{\bar{0}}]= \langle
x_{\bar{0}},h\rangle$ for all $h\in \Omega_{\Delta_{\bar{0}}}$.
Furthermore, from the proof of Proposition
\ref{pro-even-part-generators} it follows that $L_{\bar{0}}=\langle
x_{\bar{0}},h+z\rangle $ for $0\neq z\in C(L_{\bar{0}}).$
 By Lemma \ref{lem-weight-information}(1) and (4),  there exist
 $\alpha_{\bar{1}}^{1}\in \Delta_{\bar{1}}^{1}$ and $
 \alpha_{\bar{1}}^{2}\in \Delta_{\bar{1}}^{2}$ such that $\alpha_{\bar{1}}^{1}\neq
 \alpha_{\bar{1}}^{2}$ and $\alpha_{\bar{1}}^{1}, \alpha_{\bar{1}}^{2}\notin\Delta_{\bar{0}}.$
 Set
$x:=x_{\bar{0}}+x_{\bar{1}}^{\alpha_{\bar{1}}^{1}}+x_{\bar{1}}^{\alpha_{\bar{1}}^{2}}+z
$ for some weight vectors
 $  x_{\bar{1}}^{\alpha_{\bar{1}}^{i}}\in L_{\bar{1}}^{\alpha_{\bar{1}}^{i}},\; i=1,2.$
 Then
 $$x=(x_{\mathfrak{h}}+z)+\sum_{\alpha\in
\Delta_{\bar{0}}}x_{\bar{0}}^{\alpha}+x_{\bar{1}}^{\alpha_{\bar{1}}^{1}}+x_{\bar{1}}^{\alpha_{\bar{1}}^{2}}.$$
Write $\Phi:=\Delta_{\bar{0}}\cup
\{\alpha_{\bar{1}}^{1}\}\cup\{\alpha_{\bar{1}}^{2}\} $ and choose an
element $h'\in \Omega_{\Phi}.$ Assert $\langle x,h'\rangle=L.$ To
show that, write $ L':=\langle x,h'\rangle.$ Lemma
\ref{lemmeigenvector} implies all components $
x_{\bar{0}}^{\alpha},$ $x_{\bar{1}}^{\alpha_{\bar{1}}^{1}}$,
$x_{\bar{1}}^{\alpha_{\bar{1}}^{2}}$ and $x_{H}+z$ belong to $L'$.
Since $x_{\bar{0}}^{\alpha}\in L'$ for all $\alpha\in
\Delta_{\bar{0}},$
 from (\ref{eq17394}) we have  $x_{H}\in L' $ and then
 $z\in L'.$ As
$h'\in \Omega_{\Phi}\subset \Omega_{\Delta_{\bar{0}}},$ we obtain
 $\langle x_{\bar{0}},h'+z\rangle=L_{\bar{0}}\subset L'.$ Since $x_{\bar{1}}^{\alpha_{\bar{1}}^{i}}\in
L'$ and $L_{\bar{1}}^{\alpha_{i}}$ is an irreducible
$L_{\bar{0}}$-module,
 we have $L_{\bar{1}}^{i}\subset L',$ where $i=1,2.$ Therefore, $L=L'.$
\\

\noindent \textit{Case 2}. Suppose $C(L_{\bar{0}})=0.$ Then
$L_{\bar{0}}$ is a semi-simple Lie algebra and $L_{\bar{1}}$
decomposes into at most two irreducible components (see  Table 1.1).
\\

 \noindent \textit{Subcase 2.1}. Suppose $L_{\bar{1}}$ is an
 irreducible $L_{\bar{0}}$-module.
Note that in this subcase,  $L$ is of type $\mathrm{B}(m,n)$,
$\mathrm{D}(m,n)$, ${\rm{D}}(2,1;\alpha),$ $\mathrm{Q}(n),$
$\mathrm{G}(3)$ or $\mathrm{F}(4).$  We choose a weight vector $
x_{\bar{1}}^{\alpha_{\bar{1}}}\in L_{\bar{1}}^{\alpha_{\bar{1}}} $ ($\alpha_{\bar{1}}\neq 0$) and any balanced element $x_{\bar{0}}$ in
$L_{\bar{0}}.$ By Lemma \ref{lem1142lt}, we may assume that
 ${\bf{supp}}(x_{\bar{0}})=\Delta_{\bar{0}}.$

If $L\neq {\rm{Q}}(n),$    according to Lemma
\ref{lem-weight-information}(1), $
\alpha_{\bar{1}}\notin\Delta_{\bar{0}}.$    Let
$x=x_{\bar{0}}+x_{\bar{1}}^{\alpha_{\bar{1}}}.$ Then
$$x=x_{H}+\sum_{\alpha\in{\Delta_{\bar{0}}}}x_{\bar{0}}^{\alpha}+x_{\bar{1}}^{\alpha_{\bar{1}}}$$ is the
 root-vector decomposition. Let
$\Phi=\Delta_{\bar{0}}\cup\{\alpha_{\bar{1}}\}$. By Lemmas
\ref{lemma-zarisk} and \ref{lemmeigenvector},
  all components $x_{H}$, $x_{\bar{0}}^{\alpha}$
and  $x_{\bar{1}}^{\alpha_{\bar{1}}}$ belong to $\langle x,
h\rangle$ for $h\in \Omega_{\Phi}\subset H$. By (\ref{eq1739f}) and
(\ref{eq17394}), this yields $L_{\bar{0}}=\langle x_{\bar{0}},
h\rangle\subset\langle x, h\rangle.$ Since
$x_{\bar{1}}^{\alpha_{\bar{1}}}\in  \langle x, h\rangle$ and
$L_{\bar{1}}$ is irreducible as $L_{\bar{0}}$-module, we have
$L=\langle x, h\rangle.$

 Suppose $L={\rm{Q}}(n).$ Denote by
$\Pi:=\{\delta_{1},\delta_{2},\ldots,\delta_{n}\}$ the set of simple
roots of $L_{\bar{0}}$ relative to the Cartan subalgebra $H.$
According to Lemma \ref{lem-weight-information}(2), without loss of
generality we may assume that $
\alpha_{\bar{1}}:=\delta_{1}+\delta_{2}.$ Let
$x=x_{\bar{0}}+x_{\bar{1}}^{\alpha_{\bar{1}}}.$ Then
$$x=x_{H}+\sum_{\alpha\in{\Delta_{\bar{0}}}\setminus\{\alpha_{\bar{1}}\}}
x_{\bar{0}}^{\alpha}+(x_{\bar{0}}^{\alpha_{\bar{1}}}+x_{\bar{1}}^{\alpha_{\bar{1}}}).$$
  By Lemma \ref{lemmeigenvector},  all
components $x_{H}$, $x_{\bar{0}}^{\alpha}$ ($\alpha\in
\Delta_{\bar{0}}\setminus \{\alpha_{\bar{1}}\}$), and
$x_{\bar{0}}^{\alpha_{\bar{1}}}+x_{\bar{1}}^{\alpha_{\bar{1}}}$ belong to
$\langle x,h\rangle$, where $h\in \Omega_{\Delta_{\bar{0}}}\subset H.$
From (\ref{eq0943}) and (\ref{eq1739f})  we conclude that
$x_{\bar{0}}^{\alpha_{\bar{1}}}\in\mathbb{F}
[x_{\bar{0}}^{\delta_{1}},x_{\bar{0}}^{\delta_{2}}]\subset\langle
x,h\rangle $ and
then $x_{\bar{1}}^{\alpha_{\bar{1}}}\in\langle x,h\rangle.$ As above, the irreducibility of $L_{{\bar{1}}}$ yields  $L=\langle x,h\rangle.$\\

 \noindent
\textit{Subcase 2.2}. Suppose $L_{\bar{1}}=L_{\bar{1}}^{1}\oplus
L_{\bar{1}}^{2}$ is a direct sum of two irreducible
$L_{\bar{0}}$-submodules. In this case, $L={\rm{A}}(n,n)$ or $
{\rm{P}}(n).$
  Choose any balanced element $x_{\bar{0}}\in
L_{\bar{0}}$ and weight vectors
$x_{\bar{1}}^{\alpha_{\bar{1}}^{i}}\in
L_{\bar{1}}^{\alpha_{\bar{1}}^{i}},$  where ${\alpha_{\bar{1}}^{1}}$
and ${\alpha_{\bar{1}}^{2}}$ are different nonzero weights and
$\alpha_{\bar{1}}^{i}\notin \Delta_{\bar{0}}$ (Lemma
\ref{lem-weight-information}(1) and (4)).
 Lemma \ref{lem1142lt} allows us to assume that
 ${\bf{supp}}(x_{\bar{0}})=\Delta_{\bar{0}}.$ Let $x:= x_{\bar{0}}+x_{\bar{1}}^{\alpha_{\bar{1}}^{1}}+x_{\bar{1}}^{\alpha_{\bar{1}}^{2}}$
 and $\Phi:=\Delta_{\bar{0}}\cup
\{\alpha_{\bar{1}}^{1}\}\cup \{\alpha_{\bar{1}}^{2}\}.$ As before, we are able to deduce that
$L_{\bar{0}}\subset \langle x,h\rangle$ and
$x_{\bar{1}}^{\alpha_{\bar{1}}^{1}},x_{\bar{1}}^{\alpha_{\bar{1}}^{2}}\in \langle x,h\rangle$ for
$h\in\Omega_{\Phi}\subset
\Omega_{\Delta_{\bar{0}}}\subset H.$ Thanks to the irreducibility of $L_{\bar{1}}^{1}$ and $L_{\bar{1}}^{2}$,
 we have $L=\langle x,h\rangle $. The proof is complete.
\end{proof}

\begin{remark}
In view of the proof of Theorem \ref{th-classical}, starting from
any balanced element in the semi-simple part of the Lie algebra of a
classical Lie superalgebra $L$ we are able to find two elements
generating $L.$
\end{remark}
By Theorem \ref{th-classical}, as in the proof of Proposition
\ref{pro-even-part-generators}, one is able to  prove the following
\begin{corollary}
The general linear Lie superalgebra $\mathfrak{gl}(m,n)$ is
generated by 2 elements.
\end{corollary}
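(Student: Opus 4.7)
The strategy is to combine Theorem~\ref{th-classical} with the projection trick from the proof of Proposition~\ref{pro-even-part-generators}, along the lines of Remark~\ref{remarkgln}. The argument splits into two cases, according to whether the identity matrix $z=I_{m+n}$ has nonzero supertrace.

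If $m\neq n$, then $z$ is central in $\mathfrak{gl}(m,n)$ and $\mathrm{str}(z)=m-n\neq 0$, so $\mathfrak{gl}(m,n)=\mathfrak{sl}(m,n)\oplus\mathbb{F}z$ as a direct sum of Lie superalgebras, the first summand being the classical simple $\mathrm{A}(m-1,n-1)$. Theorem~\ref{th-classical} furnishes $x,y\in\mathfrak{sl}(m,n)$ with $\langle x,y\rangle=\mathfrak{sl}(m,n)$, and my candidate generating pair is $(x,y+z)$. Since $z$ is central, bracketing kills it, whence $[\langle x,y+z\rangle,\langle x,y+z\rangle]=[\langle x,y\rangle,\langle x,y\rangle]=\mathfrak{sl}(m,n)$, the last equality by perfectness of the simple superalgebra $\mathfrak{sl}(m,n)$. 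Hence $\mathfrak{sl}(m,n)\subset\langle x,y+z\rangle$, and combined with $y+z\in\langle x,y+z\rangle\setminus\mathfrak{sl}(m,n)$ this forces $\langle x,y+z\rangle=\mathfrak{gl}(m,n)$. (Alternatively one could mimic the projection argument of Proposition~\ref{pro-even-part-generators} verbatim to rule out the degenerate case.)

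The case $m=n$ is the real obstacle, because $z$ has supertrace zero and already lies in $\mathfrak{sl}(n,n)$, so no direct-sum splitting is available. Here I would rerun the argument of Theorem~\ref{th-classical} Subcase~2.2 directly on $\mathfrak{gl}(n,n)$: its even part $\mathfrak{gl}(n)\oplus\mathfrak{gl}(n)$ has semisimple piece $\mathfrak{sl}(n)\oplus\mathfrak{sl}(n)$, its odd part decomposes into two irreducible even submodules, and its root spaces relative to the full diagonal Cartan $H$ (of dimension $2n$) are all one-dimensional. I would pick a balanced $x_{\bar{0}}\in\mathfrak{sl}(n)\oplus\mathfrak{sl}(n)$, weight vectors $x_{\bar{1}}^{\alpha_i}$ in the two odd components, set $x:=x_{\bar{0}}+x_{\bar{1}}^{\alpha_1}+x_{\bar{1}}^{\alpha_2}$, and choose $h\in H$ lying in $\Omega_{\Phi}$ for $\Phi:=\Delta_{\bar{0}}\cup\{\alpha_1,\alpha_2\}$ with the additional constraint $\mathrm{str}(h)\neq 0$. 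Lemma~\ref{lemmeigenvector} then recovers every weight component of $x$ inside $\langle x,h\rangle$; brackets of even root vectors produce $\mathfrak{sl}(n)\oplus\mathfrak{sl}(n)$; super-brackets $[e_\beta,e_{-\beta}]$ of odd root vectors contribute Cartan elements of the form $E_{ii}+E_{n+j,n+j}$, so that jointly they sweep out the codimension-one subspace $\mathfrak{sl}(n,n)\cap H$; finally the condition $\mathrm{str}(h)\neq 0$ supplies the last Cartan direction, so $H\subset\langle x,h\rangle$. Irreducibility of the two odd $\mathfrak{gl}(n,n)_{\bar{0}}$-components then forces $\langle x,h\rangle=\mathfrak{gl}(n,n)$.

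The hard part is exactly this second case: one must verify that the odd-root super-brackets produce precisely the extra Cartan direction needed to pass from $(\mathfrak{sl}(n)\oplus\mathfrak{sl}(n))\cap H$ up to $\mathfrak{sl}(n,n)\cap H$, so that a single auxiliary condition on $h$ (nonzero supertrace) closes the final gap to the full $2n$-dimensional Cartan.
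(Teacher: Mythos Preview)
Your proposal is correct. For $m\neq n$ it coincides exactly with what the paper intends: the paper's entire proof is the single sentence ``By Theorem~\ref{th-classical}, as in the proof of Proposition~\ref{pro-even-part-generators}'', and for $m\neq n$ this unpacks precisely to your Case~1 (split off the one-dimensional center from the simple ideal $\mathfrak{sl}(m,n)=\mathrm{A}(m-1,n-1)$ and apply the projection/perfectness trick).

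For $m=n$ you correctly observe that the direct-sum decomposition center~$\oplus$~simple is unavailable, since $I_{2n}\in\mathfrak{sl}(n,n)$; the paper's one-line hint does not make explicit how this case is handled. Your Case~2 fills this in by rerunning the weight-space argument of Theorem~\ref{th-classical} directly on $\mathfrak{gl}(n,n)$, imposing the additional open condition $\mathrm{str}(h)\neq 0$ on the Cartan element. The argument is sound, and your closing worry can be dismissed: once Lemma~\ref{lemmeigenvector} has separated the even root vectors and the two odd weight vectors, the $\mathfrak{sl}(n)\oplus\mathfrak{sl}(n)$-irreducibility of each odd block delivers all of $L_{\bar 1}$, and then the super-brackets $[e_{i,n+j},e_{n+j,i}]=e_{ii}+e_{n+j,n+j}$ span exactly the $(2n-1)$-dimensional hyperplane $\mathfrak{sl}(n,n)\cap H$, so a single $h$ with $\mathrm{str}(h)\neq 0$ completes $H$. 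Thus your treatment is not so much a different route as a fleshed-out reading of the paper's hint, supplying the detail that the paper leaves implicit in the equal-rank case.
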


As a subsidiary result, let us show  that a classical Lie superalgebra, except for
$\mathrm{A}(1,1)$, $\mathrm{Q}(n)$ or $\mathrm{P}(3)$, is generated
by 2 \textit{homogeneous} elements. By Lemma \ref{lem-weight-information}(3), for such a classical Lie superalgebra, all the odd-weight subspaces are 1-dimensional. Here we give a more general description in Remark \ref{homogeneous generators}. As before, an element $x\in L$ is called $\Delta_{\bar{1}}$-balanced if $x$ is
a sum of all the odd-weight vectors, namely,
$x=\sum_{\gamma\in\Delta_{\bar{1}}}x_{\bar{1}}^{\gamma},$ where
$x_{\bar{1}}^{\gamma}$ is a weight vector of $\gamma$.
\begin{remark}\label{homogeneous generators}
A finite dimensional simple Lie superalgeba (not necessarily classical) for which all the odd-weight is $1$-dimensional is generated by  2 homogeneous elements.
\end{remark}
\begin{proof} Let $L$ be such a Lie superalgebra.
Choose  a $\Delta_{\bar{1}}$-balanced element
$x=\sum_{\gamma\in\Delta_{\bar{1}}}x_{\bar{1}}^{\gamma}$ and any
$h\in \Omega_{\Delta_{\bar{1}}}\subset H.$ By Lemmas
\ref{lemma-zarisk} and \ref{lemmeigenvector},
  all components $x_{\bar{1}}^{\gamma}$ belong to $\langle x,
h\rangle$ for $h\in \Omega_{\Delta_{\bar{1}}}\subset H$. Since
$\mathrm{dim}L^{\gamma}=1,$ we conclude that
$L^{\gamma}\subset\langle x, h\rangle$ for all $\gamma\in
\Delta_{\bar{1}}.$ By \cite[Proposition
1.2.7(1), p.20]{Kac}, $L_{\overline{0}}=[L_{\overline{0}},L_{\overline{0}}]$ and then $\langle x,
h\rangle=L$.
\end{proof}

Finally  we give an example to explain how to find the pairs of generators  in Theorem
\ref{th-classical} and Remark \ref{homogeneous generators}.

\begin{example} Let $\mathrm{A}={\rm{A}}(1;0)$. Find the generators of $\mathrm{A}$ as in Theorem
\ref{th-classical} and Remark \ref{homogeneous generators}.
\end{example}
  Recall that $ {\rm{A}} =\{x\in  \mathfrak{gl}(2;1)\mid
{\rm{str}}(x)=0\}$.
 Its Lie algebra is a direct sum of the $1$-dimensional center and the
semi-simple part:
$${\rm{A}} _{\bar{0}}=\mathbb{F}(e_{11}+e_{22}+2e_{33})\oplus
[{\rm{A}} _{\bar{0}},{\rm{A}} _{\bar{0}}],$$ where $[{\rm{A}}
_{\bar{0}},{\rm{A}}
_{\bar{0}}]=\mathrm{span}_{\mathbb{F}}\{e_{11}-e_{22}, e_{12},
e_{21}\}.$ The odd part is a direct sum of two irreducible
$\mathrm{A} _{\bar{0}}$-submodules:
\begin{eqnarray*}
{\rm{A}} _{\bar{1}}={\rm{A}} _{\bar{1}}^{1}\oplus {\rm{A}}
_{\bar{1}}^{2}=\mathrm{span}_{\mathbb{F}}\{e_{13},e_{23}\}\oplus
\mathrm{span}_{\mathbb{F}}\{ e_{31},e_{32}\}.\end{eqnarray*} The
standard Cartan subalgebra is
$H=\mathrm{span}_\mathbb{F}\{e_{11}-e_{22},e_{11}+e_{22}+2e_{33}\}.$

Table 1.2 gives all the roots and the corresponding root vectors.
\\

~~\begin{tabular}{|l|l|l|l|l|l|l|}
\multicolumn{7}{c}{ Table 1.2} \\[1pt]
\hline
 { roots}&
$\varepsilon_{1}-\varepsilon_{2}$&$\varepsilon_{2}-\varepsilon_{1}$&$\varepsilon_{1}-2\varepsilon_{3}$&$\varepsilon_{2}-2\varepsilon_{3}$&$-\varepsilon_{1}+2\varepsilon_{3}$&$-\varepsilon_{2}+2\varepsilon_{3}$\\
\hline { vectors}&$\hfill e_{12}\hfill$&$\hfill e_{21}\hfill$&$\hfill e_{13}\hfill$&$\hfill e_{23}\hfill$&$\hfill e_{31}\hfill$&$\hfill e_{32}\hfill$\\
\hline
\end{tabular}\\
\begin{itemize}
\item \textit{Theorem \ref{th-classical}-Version}.
Put $x:=(e_{12}+e_{21})+e_{13}+e_{31}+(e_{11}+e_{22}+2e_{33})$ and
$h:=3e_{11}+e_{22}+4e_{33}.$ From Table 1.2, the weight values
corresponding to $e_{12},e_{21},e_{13},e_{31}$ are $2, -2, -5, 5,$
respectively. As in the proof of Theorem \ref{th-classical}, we have
$$e_{12},e_{21},e_{13},e_{31},e_{11}+e_{22}+2e_{33}\in \langle
x,h\rangle.$$ Furthermore,
\begin{eqnarray*}&&\langle e_{12}+e_{21}, h+(e_{11}+e_{22}+2e_{33})\rangle=
{\rm{A}}_{\bar{0}}\subset \langle x,h\rangle.\end{eqnarray*} Since
${\rm{A}}_{\bar{1}}^{i}$ is an irreducible
${\rm{A}}_{\bar{0}}$-module,
  ${\rm{A}}_{\bar{1}}^{i}\subset\langle x,h\rangle$,  $i=1,2.$ Hence $\mathrm{A}=\langle x,h\rangle$.

\item \textit{Remark \ref{homogeneous generators}-Version}.
Consider the $\Delta_{\overline{1}}$-balanced element
$x:=e_{13}+e_{31}+e_{23}+e_{32}$ and write $h:=e_{11}+e_{33}.$ By Table
1.2, the weight values corresponding to
$e_{13},e_{31},e_{23},e_{32}$ are $-1, 1, -2, 2,$ respectively. As in
the proof of Remark \ref{homogeneous generators}, we have
$e_{13},e_{31},e_{23},e_{32}\in \langle x,h\rangle.$ Since
$\mathrm{dimA}_{\bar{1}}^{\lambda}=1$ for $\lambda\in
\Delta_{\bar{1}}$ and $[
\mathrm{A}_{\bar{1}},\mathrm{A}_{\bar{1}}]=\mathrm{A}_{\overline{0}},$ we
obtain $\mathrm{A}=\langle x,h\rangle.$
\end{itemize}

\section{Cartan Lie superalgebras}
All the Cartan Lie superalgebras are listed below \cite{Kac, MS}:
\begin{itemize}
\item[]
$W(n)$ ($n\geq 3$),\;  $S(n)$ ($n\geq 4$),\;
$\widetilde{S}(2m)$ ($m\geq 2$),\; $H(n)$
  ($n\geq 5$).
  \end{itemize}
Let $\Lambda(n)$ be the
Grassmann superalgebra with $n$  generators $\xi_{1},\ldots,\xi_{n}$.
For a $k$-\textit{shuffle} $u:=(i_{1},i_{2},\ldots,i_{k})$, that is, a
strictly increasing sequence between $1$ and $n$,  we write $|u|:=k$
and $x^{u}:=\xi_{i_{1}}\xi_{i_{2}} \cdots \xi_{i_{k}}.$ Letting
${\rm{deg}}\xi_{i}=1,\; i=1,\ldots,n,$ we obtain the so-called standard
$\mathbb{Z}$-grading of $\Lambda(n).$ Let us briefly
describe the
  Cartan  Lie superalgebras.

  \begin{itemize}\item
 $W(n)={\rm{der}}\Lambda(n)$ is $\mathbb{Z}$-graded,
$W(n)=\oplus_{k=-1}^{n-1}W(n)_{k},$
$$W(n)_{k}={\rm{span}}_{\mathbb{F}}\{
x^{u}\partial/\partial\xi_{i}\mid |u|=k+1,\; 1\leq i\leq n\}.$$
\end{itemize}
\begin{itemize}\item
$S(n)=\oplus_{k=-1}^{n-2}S(n)_{k}$ is a $\mathbb{Z}$-graded
subalgebra of $W(n)$,
$$S(n)_{k}={\rm{span}}_{\mathbb{F}}\{\mathrm{D}_{ij}(x^{u})\mid |u|=k+2,\,\ 1\leq i,j\leq n\}.$$
 Hereafter, ${{\mathrm{D}_{ij}}}(f):=\partial(f)/
\partial\xi_{i}\partial/\partial\xi_{j}+\partial(f)/\partial\xi_{j}\partial/\partial\xi_{i}$
for $f\in \Lambda(n).$
\end{itemize}
\begin{itemize}\item $\widetilde{S}(2m)$ ($m\geq 2$) is a subalgebra of $W(2m)$ and
as  a $\mathbb{Z}$-graded subspace,
$$\widetilde{S}(2m)=\oplus_{k=-1}^{2m-2}\widetilde{S}(2m)_{k},$$
where
\begin{eqnarray*}&&\widetilde{S}(2m)_{-1}={\rm{span}}_{\mathbb{F}}\{(1+\xi_{1}\cdots\xi_{2m})\partial/\partial\xi_{j}\mid
1\leq j\leq 2m\},\\&&\widetilde{S}(2m)_{k}=S(2m)_{k},\; 0\leq k\leq
2m-2.\end{eqnarray*} Notice that $\widetilde{S}(2m)$ is not a
$\mathbb{Z}$-graded subalgebra of $W(2m)$.
 \end{itemize}
 \begin{itemize}
 \item
 $H(n)=\oplus_{k=-1}^{n-3}H(n)_{k}$ is a
      $\mathbb{Z}$-graded
subalgebra of $W(n)$, where
$$H(n)_{k}={\rm{span}}_{\mathbb{F}}\{{\rm{D_{H}}}(x^{u})\mid|u|=k+2\}.$$
To explain the linear mapping ${\rm{D_{H}}}: \Lambda(n)\longrightarrow W(n)$, write $n=2m$ $(m\geq 3)$
or $2m+1$ $(m\geq 2).$  By definition,
 ${\rm{D_{H}}}(x^{u}):=(-1)^{|u|}\sum_{i=1}^{n}\partial(x^{u})/\partial\xi_{i}\partial/\partial\xi_{i'} $
  for any  shuffle   $u,$ where
 $'$ is the
involution of the index set $\{1,\ldots,n\}$ satisfying that $i'=i+m$ for
      $i\leq m$.
 \end{itemize}

For simplicity we usually write $W, S, \widetilde{S}, H$ for $W(n),
S(n), \widetilde{S}(2m), H(n),$ respectively. Throughout this section  $L$
denotes one of  Cartan Lie superalgebras. Consider its decomposition of
subspaces mentioned above:
\begin{equation}\label{eqcartangraded}
L=L_{-1}\oplus \cdots \oplus L_{s}.
\end{equation}
For $W, S, \widetilde{S}$ and $H$, the height $s$ is $n-1,$ $n-2,$ $2m-2$
or $n-3,$ respectively. Note that $S $ and $H$ are
$\mathbb{Z}$-graded subalgebras of $W$ with respect to
(\ref{eqcartangraded}), but $\widetilde{S}$ is not. The null $L_{0}$
is isomorphic to $\mathfrak{gl}(n),
\mathfrak{sl}(n),\mathfrak{sl}(2m),\mathfrak{so}(n)$ for $L=W, S,
\widetilde{S}, H,$ respectively.

\begin{lemma}\label{lem-Cartan-component}
~~~
\begin{itemize}
\item[$\mathrm{(1)}$]  $L_{-1}$ and
$L_{s}$ are irreducible as $L_{0}$-modules.
\item[$\mathrm{(2)}$] $L_{1}$
is an irreducible $L_{0}$-module for $L=S,\widetilde{S}$ or $H,$ except for $H(6).$ For $L=H(6),$
$L_1$ is a direct sum of
two irreducible $L_0$-submodules.

\item[$\mathrm{(3)}$] $L$ is generated by  the local part
$L_{-1}\oplus L_{0}\oplus L_{1}.$

\item[$\mathrm{(4)}$] $L$ is generated by
$L_{-1}$ and $L_{s}$ for $L=W$, $S $ or $H$.
\end{itemize}
\end{lemma}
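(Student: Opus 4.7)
Parts (1) and (2) are standard facts about the $L_0$-module structure of the graded pieces of a Cartan-type Lie superalgebra; I would cite them directly from \cite{Kac} (see also \cite{MS}). The content is that $L_{-1}$ is the natural module for $L_0\cong\mathfrak{gl}(n),\mathfrak{sl}(n),\mathfrak{sl}(2m),\mathfrak{so}(n)$ when $L=W,S,\widetilde{S},H$ respectively, each irreducible in the given dimension range; $L_s$ is an irreducible $L_0$-module, which can be verified by inspection of its explicit basis; and the $L_1$-structure, with the distinguished case of $H(6)$ forced by the involution $i\mapsto i+m$ producing two orbits on the natural basis, is a direct computation.

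For part (3), let $M$ be the subalgebra of $L$ generated by the local part $L_{-1}\oplus L_0\oplus L_1$. It inherits a $\mathbb{Z}$-grading when $L$ is one of $W,S,H$, or a compatible filtered structure when $L=\widetilde{S}$, and $M_k\subseteq L_k$ for all $k$. My plan is to prove by induction on $k\geq 2$ that $L_k=[L_1,L_{k-1}]$, which closes the induction. The inclusion $[L_1,L_{k-1}]\subseteq L_k$ is automatic from the grading; for the reverse I would produce any basis element of $L_k$ (of the form $\xi_{j_1}\cdots\xi_{j_{k+1}}\,\partial/\partial\xi_\ell$, $\mathrm{D}_{ij}(x^u)$, or $\mathrm{D}_{\mathrm{H}}(x^u)$ according as $L$ is $W$, $S$ or $\widetilde{S}$, or $H$) as an explicit single bracket from $L_1\times L_{k-1}$ via the graded Leibniz rule applied to the defining operators.

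For part (4), write $N:=\langle L_{-1},L_s\rangle$; by (3) it suffices to show $L_0,L_1\subseteq N$, and for this I would prove the dual statement
\[
[L_{-1},L_k]=L_{k-1}\qquad(0\leq k\leq s).
\]
Iterating down from $L_s\subseteq N$ then places $L_{s-1},L_{s-2},\ldots,L_1,L_0$ inside $N$, and combined with (3) this gives $L=N$. The inclusion $[L_{-1},L_k]\subseteq L_{k-1}$ from the grading is trivial, and for the reverse one computes explicitly: in the $W$ case $[\partial/\partial\xi_{j_r},\,\xi_{j_1}\cdots\xi_{j_{k+1}}\,\partial/\partial\xi_\ell]=\pm\,\xi_{j_1}\cdots\widehat{\xi_{j_r}}\cdots\xi_{j_{k+1}}\,\partial/\partial\xi_\ell$ reaches any basis element of $L_{k-1}$, and the analogous Leibniz identities for $\mathrm{D}_{ij}$ and $\mathrm{D}_{\mathrm{H}}$ handle $S$ and $H$. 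I anticipate the main obstacle to be the Hamiltonian case: because $\mathrm{D}_{\mathrm{H}}$ couples each index $i$ with its involution image $i'=i+m$, one must verify by careful bookkeeping that the image $[L_{-1},L_k]$ really exhausts $L_{k-1}$ rather than missing some $L_0$-stable subspace.
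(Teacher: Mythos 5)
Your plan is essentially correct, but it takes a much more computational route than the paper, which simply declares all four statements standard and cites \cite{Kac,MS}, supplying an actual argument only for the one genuinely non-standard point: that $\widetilde{S}_{-1}$ is an irreducible $\widetilde{S}_{0}$-module. That is exactly the point your proposal glosses over. For $L=\widetilde{S}(2m)$ the component $\widetilde{S}_{-1}={\rm span}\{(1+\xi_{1}\cdots\xi_{2m})\partial/\partial\xi_{j}\}$ is \emph{not} literally the natural module sitting inside the grading of $S$; one must first check it is stable under $\widetilde{S}_{0}=S_{0}$ at all (the bracket with $\xi_{1}\cdots\xi_{2m}$ could a priori produce top-degree terms) and then identify it with $S_{-1}$ via $\partial/\partial\xi_{i}\mapsto(1+\xi_{1}\cdots\xi_{2m})\partial/\partial\xi_{i}$, which is what the paper does. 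Relatedly, your remark that the local part of $\widetilde{S}$ carries a ``compatible filtered structure'' needs care: $[\widetilde{S}_{-1},\widetilde{S}_{-1}]$ lands in $S_{2m-2}$, so the correct argument for (3) in that case is to generate $\bigoplus_{k\geq 0}S_{k}$ from $S_{0}\oplus S_{1}$ (a genuinely graded subalgebra) and then adjoin $\widetilde{S}_{-1}$. Your explicit bracket computations for (3) and (4) --- $L_{k}=[L_{1},L_{k-1}]$ and $L_{k-1}=[L_{-1},L_{k}]$ via the Leibniz identities for $\partial/\partial\xi_{i}$, $\mathrm{D}_{ij}$ and $\mathrm{D}_{\mathrm{H}}$ --- do go through, including the Hamiltonian case (since $|v|\leq n-2<n$ one can always adjoin an unused index, and $[\mathrm{D_{H}}(\xi_{i}),\mathrm{D_{H}}(x^{u})]=\pm\mathrm{D_{H}}(\partial(x^{u})/\partial\xi_{i'})$ hits every spanning element), so the obstacle you anticipate does not materialize; what this buys you over the paper is a self-contained verification rather than a citation. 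One small inaccuracy: the splitting of $H(6)_{1}$ is not ``forced by the involution producing two orbits on the basis''; it comes from $H(6)_{0}\cong\mathfrak{so}(6)\cong\mathfrak{sl}(4)$ and the decomposition of $\Lambda^{3}$ of the $6$-dimensional natural module into two irreducible summands, but since you are citing (1) and (2) anyway this does not affect correctness.
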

\begin{proof} All the statements are standards (see \cite{Kac,MS} for example), except for
that $\widetilde{S}_{-1}$ is  irreducible as
$\widetilde{S}_{0}$-module.
 Indeed, a direct verification shows that $\widetilde{S}_{-1}$ is an $\widetilde{S}_{0}$-module and the
irreducibility follows from the canonical isomorphism of $S_{0}$-modules
$\varphi: S_{-1} \longrightarrow  \widetilde{S}_{-1}$ assigning $
\partial/\partial\xi_{i}$ to
  $(1+\xi_{1}\cdots\xi_{2m})\partial/\partial\xi_{i} $ for $1\leq i\leq 2m.$
\end{proof}

The following is a list of
bases of the standard Cartan subalgebras $\mathfrak{h}_{L_0} $ of
$L_{0}.$\\

~~~~~~~~~~~\begin{tabular}{|l|l|}
\multicolumn{2}{c}{Table 2.1} \\[1pt]\hline
\multicolumn{1}{|c|}{$L$}&\multicolumn{1}{|c|}{ basis of $\mathfrak{h}_{L_0}$}\\
\hline
 ~~$W(n)$&
~~~$\xi_{i}\partial/\partial\xi_{i},$ $1\leq
i\leq n$\\
\hline ~~$S(n)$&~~~$
\xi_{1}\partial/\partial\xi_{1}-\xi_{j}\partial/\partial\xi_{j},$
$2\leq j\leq n$\\
\hline ~~$\widetilde{S}(2m)$&~~~$
\xi_{1}\partial/\partial\xi_{1}-\xi_{j}\partial/\partial\xi_{j},$
$2\leq j\leq 2m$\\
\hline
~~$H(2m)$&~~~$\xi_{i}\partial/\partial\xi_{i}-\xi_{m+i}\partial/\partial\xi_{m+i},$
$1\leq i\leq m$\\
\hline
~~$H(2m+1)$&~~~$\xi_{i+1}\partial/\partial\xi_{i+1}-\xi_{m+i}\partial/\partial\xi_{m+i},$
$1\leq i\leq m$\\
\hline
\end{tabular}\\

\noindent The weight space decomposition of the component $L_k$ relative to
   $\mathfrak{h}_{L_0}$ is:
$$L_{k}=\delta_{k,0}\mathfrak{h}_{L_0}\oplus_{\alpha\in
\Delta_{k}}L_{k}^{\alpha},\,\ \mbox{where}\,\ -1\leq k\leq s.$$
By Lemma \ref{lem-Cartan-component}(2), $H(6)_{1}$ is a direct sum
of two irreducible $H(6)_{0}$-modules
$$H(6)_{1}=H(6)_{1}^{1}\oplus H(6)_{1}^{2}.$$ Let $\Delta_{1}^{i}$
be the weight set of $H(6)_{1}^{i},$ $i=1,2.$
 Write $\Pi$ for  the set of simple roots of  $L_0$ relative to the
Cartan subalgebra $\mathfrak{h}_{L_0}$. We have

\begin{lemma}\label{lem-cartan-weight-information}
~
\begin{itemize}
\item[$\mathrm{(1)}$]
If $L=W$ or $S$ then
$\Pi\cap\Delta_{-1}=\Pi\cap\Delta_{s}=\Delta_{-1}\cap
\Delta_{s}=\emptyset.$
\item[$\mathrm{(2)}$]
If  $L=\widetilde{S}$   then
$\Pi\cap\Delta_{-1}=\Pi\cap\Delta_{1}=\Delta_{-1}\cap
\Delta_{1}=\emptyset.$
\item[$\mathrm{(3)}$]
If  $L=H(2m)$  then $\Pi\cap\Delta_{-1}=\Pi\cap\Delta_{1}=\emptyset$
and $\Delta_{-1}\neq \Delta_{1}$.
\item[$\mathrm{(4)}$]
 If $L=H(2m+1) $ then  $0\in \Delta_{-1},$  $\Pi\neq \Delta_{1}$ and $\Delta_{-1}\neq\Delta_{1}.$
\item[$\mathrm{(5)}$] There exist
nonzero weights $\alpha_{1}^{i}\in \Delta_{1}^{i} $  such
that $\alpha_{1}^{1}\neq \alpha_{1}^{2}.$
\end{itemize}
\end{lemma}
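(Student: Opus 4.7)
The proof is a direct computation of weight sets in $\mathfrak{h}_{L_0}^{*}$. My plan is to write out $\Delta_{-1}$, $\Delta_1$, and $\Delta_s$ explicitly in terms of the standard weights $\varepsilon_i$ (or $\delta_i$) dual to the basis of $\mathfrak{h}_{L_0}$ in Table 2.1, identify the simple-root system $\Pi$ of $L_0$, and then verify each disjointness or non-equality claim by elementary linear algebra.

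For $L\in\{W(n),S(n)\}$ with $L_0=\mathfrak{gl}(n)$ or $\mathfrak{sl}(n)$, the standard bracket shows that $x^u\partial/\partial\xi_i$ is a joint weight vector of weight $\varepsilon_u-\varepsilon_i$, where $\varepsilon_u:=\sum_{k\in u}\varepsilon_k$. Hence $\Delta_{-1}=\{-\varepsilon_i\}$, $\Delta_{n-1}=\{\sum_k\varepsilon_k-\varepsilon_i\}$ for $W$ and $\Delta_{n-2}=\{-\bar\varepsilon_i-\bar\varepsilon_j\}$ for $S$, while $\Pi=\{\varepsilon_i-\varepsilon_{i+1}\}$. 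For $W$, the linear functional ``sum of all $\varepsilon$-coefficients'' takes the values $0,-1,n-1$ on $\Pi,\Delta_{-1},\Delta_s$ respectively, giving (1) at once. For $S$, each candidate equality becomes a linear relation in $\mathbb{F}^{n}/\langle\sum_ke_k\rangle$; for $n\ge 4$ one always finds an index outside the combined support of the two candidates, forcing the scalar multiple to vanish and producing a contradiction. For $L=\widetilde{S}(2m)$, the relation $\sum_{k=1}^{2m}\bar\varepsilon_k=0$ in $\mathfrak{sl}(2m)^{*}$ makes $(1+\xi_1\cdots\xi_{2m})\partial/\partial\xi_i$ a joint weight vector of weight $-\bar\varepsilon_i$, since its two summands have $\mathfrak{gl}$-weights $-\varepsilon_i$ and $\sum_k\varepsilon_k-\varepsilon_i$, equal modulo $\sum_ke_k$. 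Thus $\Delta_{-1}$ coincides with the $S(2m)$ version, and since $\widetilde{S}_1=S_1$ we have $\Delta_1=\{\bar\varepsilon_u-\bar\varepsilon_i-\bar\varepsilon_j:|u|=3\}$, and (2) reduces to the same sort of coefficient-count, with the small case $m=2$ settled by direct inspection of the few possible triples $(u,i,j)$.

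For $L=H(n)$ set $\epsilon_j:=\delta_j$ if $j\le m$, $\epsilon_j:=-\delta_{j-m}$ if $m<j\le 2m$, and $\epsilon_{2m+1}:=0$ in the odd case; note $\epsilon_{i'}=-\epsilon_i$ for all non-fixed indices. A direct application of the defining formula shows that every term in $\mathrm{D_H}(x^u)=(-1)^{|u|}\sum_i\partial_i(x^u)\partial_{i'}$ carries the common weight $\sum_{k\in u}\epsilon_k$; hence $\Delta_{-1}=\{-\epsilon_j\}$ and $\Delta_1=\{\epsilon_a+\epsilon_b+\epsilon_c:a<b<c\}$. In the type-$D$ case every element of $\Pi\cup\Delta_{-1}$ has $\delta$-expansion of $\ell^{1}$-norm $1$ or $2$, whereas every element of $\Delta_1$ has norm $1$ or $3$ (the norm-$1$ case arising precisely when two of $\{a,b,c\}$ are swapped by the involution); this comparison yields $\Pi\cap\Delta_{-1}=\Pi\cap\Delta_1=\emptyset$, and $\delta_1+\delta_2+\delta_3\in\Delta_1\setminus\Delta_{-1}$ for $m\ge 3$ gives $\Delta_{-1}\ne\Delta_1$, proving (3). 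For (4), the assertion $0\in\Delta_{-1}$ is immediate because $\partial/\partial\xi_{2m+1}$ is centralized by every element of $\mathfrak{h}_{L_0}$, and the two non-equality claims follow from explicit witnesses such as $\delta_1+\delta_2\in\Delta_1\setminus(\Pi\cup\Delta_{-1})$, obtained from the triple $(1,2,2m+1)$.

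The main obstacle is part (5), which requires the decomposition $H(6)_1=H(6)_1^1\oplus H(6)_1^2$ furnished by Lemma \ref{lem-Cartan-component}(2). Both summands are nonzero, non-isomorphic irreducible $H(6)_0$-modules, so each admits a well-defined highest weight with respect to any fixed choice of positive roots of $H(6)_0\cong\mathfrak{so}(6)$; since $\dim H(6)_1^i>1$ neither is trivial, so both highest weights are nonzero, and the non-isomorphism forces them to be distinct. Taking $\alpha_1^i$ to be the highest weight of $H(6)_1^i$ yields the required pair.
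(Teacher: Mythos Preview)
Your overall strategy---write down $\Pi$, $\Delta_{-1}$, $\Delta_1$, $\Delta_s$ explicitly and read off the required (non-)intersections---is exactly what the paper does, and your computations of the weight sets agree with the paper's. The only substantive difference is in part~(5).

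For (5) the paper argues directly from the explicit weight set \eqref{eqheven23} for $H(2m)$ with $m=3$: one sees at once that $0\notin\Delta_1$ and $|\Delta_1|>1$. The first fact guarantees that any choice of $\alpha_1^1,\alpha_1^2$ is nonzero; the second rules out the only obstruction to choosing $\alpha_1^1\neq\alpha_1^2$, namely $\Delta_1^1=\Delta_1^2=\{\alpha\}$ a common singleton. Your route instead invokes highest-weight theory together with the claims that the two summands are \emph{non-isomorphic} and that each has dimension~$>1$. Neither of these is supplied by Lemma~\ref{lem-Cartan-component}(2), which only asserts a decomposition into two irreducibles; you would still need to justify them (both are true---the pieces are the two half-spin $\mathfrak{so}(6)$-modules---but that is extra input). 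The paper's argument sidesteps this entirely and is strictly shorter, since it uses only the weight list you have already computed. Everything else in your plan matches the paper's proof.
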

\begin{proof}
 We first compute the  weight sets of the desired components and the
 system of
 simple roots of $L_0.$
For  $W(n)$,
\begin{eqnarray*}
&&\Delta_{-1}=\{-\varepsilon_{j}\mid 1\leq j\leq
n\},\qquad~~~~~~~~\Delta_{0}=\{\varepsilon_{i}-\varepsilon_{j}\mid 1\leq i\neq
j\leq n\},\\
&&\Pi=\{\varepsilon_{i}-\varepsilon_{i+1}\mid 1\leq i\leq
n-1\},\qquad\Delta_{s}=\bigg\{\sum_{k=1}^{n}\varepsilon_{k}-\varepsilon_{j}\mid
1\leq j\leq n\bigg\}.
\end{eqnarray*}
For $S(n)$ and
$\widetilde{S}(n),$
\begin{eqnarray*}
&&\Delta_{-1}=\{-\varepsilon_{j}\mid 1\leq j\leq
n\},\qquad ~~~~~~~~\Delta_{0}=\{\varepsilon_{i}-\varepsilon_{j}\mid 1\leq i\neq j\leq
n\},\\
&&\Pi=\{\varepsilon_{i}-\varepsilon_{i+1}\mid 1\leq i\leq
n-1\},\qquad \Delta_{1}=\big\{\varepsilon_{k}+\varepsilon_{l}-\varepsilon_{j}\mid
1\leq k, l,j\leq n\big\},\\
&&\Delta_{s}=\bigg\{\sum_{i=1}^{n}\varepsilon_{i}-\varepsilon_{j}-\varepsilon_{k}\mid
1\leq j,k\leq n\bigg\}.\end{eqnarray*}
For $H(2m)$,
\begin{eqnarray}
&&\Delta_{-1}=\{\ \pm\varepsilon_{j}\mid 1\leq j\leq
m\},\qquad\Delta_{0}=\{\pm(\varepsilon_{i}+\varepsilon_{j}),\pm(\varepsilon_{i}-\varepsilon_{j})
\mid 1\leq i<j\leq m\},\nonumber\\
&&\Pi=\{\varepsilon_{i}-\varepsilon_{i+1},
\varepsilon_{m-1}+\varepsilon_{m}\mid 1\leq i< m\},\nonumber\\
&&\Delta_{1}=\{\pm(\varepsilon_{i}+\varepsilon_{j})\pm\varepsilon_{k},
\pm(\varepsilon_{i}-\varepsilon_{j})\pm\varepsilon_{k}\mid  1\leq i<
j< k\leq m\}\nonumber\\&&~~~~~~~~\cup\{\pm\varepsilon_{l}\mid 1\leq
l\leq m\}.\label{eqheven23}
\end{eqnarray}
For $H(2m+1)$, write $\varepsilon_{i}'=\varepsilon_{i+1}$ for $1\leq
i\leq m.$ We have
\begin{eqnarray*}
&&\Delta_{-1}=\{0\}\cup \{\pm\varepsilon_{i}'\mid 1\leq i\leq m\},\\
&&\Delta_{0}=\{\pm\varepsilon_{k}',\pm(\varepsilon_{i}'+\varepsilon_{j}'),\pm(\varepsilon_{i}'-\varepsilon_{j}')\mid
1\leq k\leq m, 1\leq i< j\leq m\},\\
&&\Pi=\{\varepsilon_{i}'-\varepsilon_{i+1}',\varepsilon_{m}'\mid
1\leq i< m\},\\
&&\Delta_{1}=\{0\}\cup\{\pm\varepsilon_{l}',\pm(\varepsilon_{i}'+\varepsilon_{j}'),\pm(\varepsilon_{i}'-\varepsilon_{j}')\mid
1\leq l\leq m, 1\leq i< j\leq m\}\\
&&~~~~~~~~\cup\{\pm(\varepsilon_{i}'+\varepsilon_{j}')\pm\varepsilon_{k},
\pm(\varepsilon_{i}'-\varepsilon_{j}')\pm\varepsilon_{k}')\mid 1\leq
i< j< k\leq m\}.\end{eqnarray*}

  All the statements follow directly, except  (5)  for $L=H(6).$
  In this special case, from (\ref{eqheven23})  one sees that $0\notin \Delta_{1} $ and $|\Delta_{1}|>1$.
  Consequently,  (5) holds.
\end{proof}

Recall that an element $x\in\mathfrak{g}$ is refereed to as
$\Pi$-balanced if $x$ is a sum of all the simple-root vectors.

\begin{theorem}\label{th-main-cartan}
 A Cartan Lie superalgebra is generated by 2
elements.
\end{theorem}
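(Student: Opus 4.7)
The plan mirrors the treatment of the classical case in Theorem~\ref{th-classical}: for each of the four types I assemble a candidate generator $x$ as the sum of a balanced element $x_{0}$ of the semi-simple part of $L_{0}$ plus one or two weight vectors drawn from components of higher $\mathbb{Z}$-degree, and pair it with a sufficiently generic Cartan element $h\in \Omega_{\Phi}\subset \mathfrak{h}_{L_{0}}$ supplied by Lemma~\ref{lemma-zarisk}. Lemma~\ref{lemmeigenvector} applied to $\mathrm{ad}(h)$ will split $x$ into its weight components inside $\langle x, h\rangle$; Lemma~\ref{lem-semi-simple14} will then deliver $L_{0}\subset \langle x, h\rangle$; and the generation and irreducibility statements of Lemma~\ref{lem-Cartan-component} will close the argument.

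For $L=S(n)$ or $H(n)$, I choose weight vectors $v_{-1}\in L_{-1}^{\alpha_{-1}}$ and $v_{s}\in L_{s}^{\alpha_{s}}$ with $\alpha_{-1}\neq \alpha_{s}$ and set $x:=x_{0}+v_{-1}+v_{s}$. Once $L_{0}\subset \langle x, h\rangle$ is established in this way, the irreducibility of $L_{-1}$ and $L_{s}$ as $L_{0}$-modules (Lemma~\ref{lem-Cartan-component}(1)) forces $L_{-1}, L_{s}\subset \langle x, h\rangle$, and Lemma~\ref{lem-Cartan-component}(4) concludes. For $L=W(n)$ the only modification is the $1$-dimensional center of $L_{0}=\mathfrak{gl}(n)$: I start with $x_{0}$ balanced in $[L_{0},L_{0}]=\mathfrak{sl}(n)$ and put $x:=x_{0}+v_{-1}+v_{s}+z$ for $0\neq z\in C(L_{0})$, following Proposition~\ref{pro-even-part-generators}; formula~(\ref{eq17394}) then recovers $x_{H}$ from the separated $x_{0}^{\alpha}$'s and hence $z$.

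For $L=\widetilde{S}(2m)$, Lemma~\ref{lem-Cartan-component}(4) is unavailable because $\widetilde{S}$ is not a $\mathbb{Z}$-graded subalgebra of $W$. Instead I would invoke Lemma~\ref{lem-Cartan-component}(3): $\widetilde{S}$ is generated by its local part $L_{-1}\oplus L_{0}\oplus L_{1}$. The same construction with $L_{s}$ replaced by $L_{1}$ then delivers $L_{-1}, L_{0}, L_{1}\subset \langle x, h\rangle$, whence $\widetilde{S}=\langle x, h\rangle$.

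The main obstacle is arranging, in every case, that the weights $\alpha_{-1}, \alpha_{s}$ (or $\alpha_{1}$) give $\mathrm{ad}(h)$-eigenvalues distinct from those in $\Delta_{0}$ and from each other, so that Lemma~\ref{lemmeigenvector} cleanly isolates $v_{-1}, v_{s}$. This is exactly what Lemma~\ref{lem-cartan-weight-information} is designed to supply, except in the delicate subcase $L=H(2m+1)$, where parts of $\Delta_{-1}$ and $\Delta_{s}$ already lie inside $\Delta_{0}$ and the $\mathrm{ad}(h)$-eigenspace at such a weight contains a root vector of $L_{0}$ summed with $v_{-1}$ or $v_{s}$. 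Here I would select $\alpha_{-1}, \alpha_{s}$ so that each is expressible as $\gamma+\delta$ with $\gamma,\delta \in \Delta_{0}\setminus\{\alpha_{-1},\alpha_{s}\}$, recover the contaminated $x_{0}^{\alpha_{-1}}, x_{0}^{\alpha_{s}}$ as brackets $[x_{0}^{\gamma}, x_{0}^{\delta}]$ via (\ref{eq0943}), and then subtract to liberate $v_{-1}$ and $v_{s}$, borrowing the device used for $\mathrm{Q}(n)$ in the proof of Theorem~\ref{th-classical}. The case $L=H(6)$, though $H(6)_{1}$ is reducible, needs no special treatment since the construction uses $L_{s}$ rather than $L_{1}$.
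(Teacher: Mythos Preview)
Your overall architecture is the same as the paper's, but two technical choices diverge, and those choices matter because they misalign with the lemmas you are citing.

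First, you take $x_{0}$ to be a \emph{fully} balanced element of $[L_{0},L_{0}]$ and invoke Lemma~\ref{lem-semi-simple14}; the paper instead uses a $\Pi$-balanced element and Corollary~\ref{lem-simple roots} together with Remark~\ref{remarkgln}. The point is that Lemma~\ref{lem-cartan-weight-information} is tailored to the $\Pi$-balanced situation: it only asserts $\Pi\cap\Delta_{-1}=\Pi\cap\Delta_{t}=\emptyset$, not $\Delta_{0}\cap\Delta_{-1}=\Delta_{0}\cap\Delta_{t}=\emptyset$. With your fully balanced $x_{0}$ you need the latter, which happens to hold for $W,S,\widetilde{S},H(2m)$ but is \emph{not} what the lemma says, so your appeal to it is off.

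Second, for $H(n)$ you work with $L_{s}$ and Lemma~\ref{lem-Cartan-component}(4); the paper works with $L_{1}$ and Lemma~\ref{lem-Cartan-component}(3). Your choice does eliminate the $H(6)$ branch, but Lemma~\ref{lem-cartan-weight-information} gives no information about $\Delta_{s}$ for $H$ (it only records $\Delta_{1}$), so again the citation does not back you up. For $H(2m+1)$ you then propose the $\mathrm{Q}(n)$-style bracket trick to disentangle the overlaps with $\Delta_{0}$; that can be made to work, but the paper's device is much cleaner: since $0\in\Delta_{-1}$ (Lemma~\ref{lem-cartan-weight-information}(4)), take $\alpha_{-1}=0$ and choose $\alpha_{1}\notin\Pi$, so no collision with $\Pi$ ever arises and Lemma~\ref{lemmeigenvector} separates the pieces immediately. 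In short, the paper's pairing of ``$\Pi$-balanced $x_{0}$'' with ``$t=1$ for $H,\widetilde{S}$'' is exactly what makes Lemma~\ref{lem-cartan-weight-information} do all the work; your variant is viable but requires you to prove additional disjointness statements that the paper's lemmas do not supply.
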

\begin{proof}
  Recall the null $L_{0}$ is
isomorphic to $\mathfrak{gl}(n), \mathfrak{sl}(n),
\mathfrak{sl}(2m)$ or $ \mathfrak{so}(n)$. From Remark
\ref{remarkgln} and Corollary \ref{lem-simple roots}, for a
$\Pi$-balanced element $x_{0}\in L_{0}$ and
$h_{0}\in\Omega_{\Pi}\subset \mathfrak{h}_{L_0}$ we have
$L_{0}=\langle x_{0}+\delta_{L,W}z,h_{0}\rangle,$ where $z $ is a
central element in $\mathfrak{gl}(n).$

For simplicity,    write $t:=s$ for $L=W$ or $S$ and $t:=1$ for
$L=\widetilde{S}$ or $H$.  Suppose $L\neq
H(6)\;\mbox{and}\;H(2m+1).$ According to
 Lemma \ref{lem-cartan-weight-information},  we are able to choose nonzero weights
 $\alpha_{-1}\in \Delta_{-1}$ and $\alpha_{t}\in \Delta_{t} $ such that
 $\alpha_{-1}\neq\alpha_{t} $, $\alpha_{-1}\notin{\Pi},$ and
  $\alpha_{t}\notin{\Pi}.$ Put
$x:=x_{-1}+x_{0}+\delta_{L,W}z+x_{t} $ for some weight vectors
$x_{-1}\in L_{-1}^{\alpha_{-1}}$ and $x_{t}\in L_{t}^{\alpha_{t}}.$
Now set $\Phi:=\Pi\cup\{\alpha_{-1}\}\cup\{\alpha_{t} \}\subset
\mathfrak{h}_{L_0}^{*}$ and choose an element $h_{0}\in
\Omega_{\Phi}.$ Assert $\langle x,h_{0}\rangle=L.$  Lemma
\ref{lemmeigenvector} implies all components $ x_{-1}$ $x_{0}$,
$\delta_{L,W}z$ and $x_{t}$ belong to $\langle x,h_{0}\rangle.$ As
$h_{0}\in \Omega_{\Phi}\subset \Omega_{\Pi},$ we obtain
 $L_{0}=\langle x_{0}+\delta_{L,W}z,h_{0}\rangle \subset \langle x,h_{0}\rangle.$ By Lemma
 \ref{lem-Cartan-component}(1) and (2), since  $L_{-1}$ and $L_{t}$ are irreducible
$L_0$-modules,
 we have $L_{-1}+L_{t}\subset \langle x,h_{0}\rangle.$  From Lemma \ref{lem-Cartan-component}(3) and (4) it follows that $L=\langle
 x,h_{0}\rangle.$

 If $L=H(6),$ by Lemma
\ref{lem-cartan-weight-information}(3), we are able to choose $\alpha_{-1}\in
\Delta_{-1},$ $\alpha_{1}^{1}\in \Delta_{1}^{1}$ and
$\alpha_{1}^{2}\in \Delta_{1}^{2}$ such that
$\alpha_{-1},\alpha_{1}^{1},\alpha_{1}^{2}$ are pairwise distinct
and $\alpha_{-1}\notin \Pi,$ $\alpha_{1}^{1}\notin \Pi$ and
$\alpha_{1}^{2}\notin \Pi$. Put
$x:=x_{-1}+x_{0}+x_{1}^{1}+x_{1}^{2}$ for some weight vectors  $
x_{-1}\in L_{-1}^{\alpha_{-1}} $ and $ x_{1}^{i}\in
L_{1}^{\alpha_{1}^{i}},$  $i=1,2.$   Write
$\Phi:=\Pi\cup\{\alpha_{-1}\}\cup\{\alpha_{1}^{1}\}\cup\{\alpha_{1}^{2}
\}.$ For $h_{0}\in \Omega_{\Phi}\subset \Omega_{\Pi}$, as in the
above, one may show that $L=\langle x,h_{0}\rangle$.

If $L=H(2m+1),$ by Lemma \ref{lem-cartan-weight-information}(4),
choose $\alpha_{-1}\in \Delta_{-1},$ $\alpha_{1}\in \Delta_{1}$
 such that $\alpha_{-1}=0,$ $\alpha_{1}\notin \Pi.$
   Set $x:=x_{-1}+x_{0}+x_{1}$ for some weight
vectors $ x_{-1}\in L_{-1}^{\alpha_{-1}}$ and $x_{1}\in
L_{t}^{\alpha_{1}}.$
 Now put
$\Phi:=\Pi\cup\{\alpha_{-1}\}\cup\{\alpha_{1} \}\subset
\mathfrak{h}_{L_0}^{*}.$ Let $h_{0}\in \Omega_{\Phi}\subset
\Omega_{\Pi} $ and  claim that $L=\langle x,h_{0}\rangle.$ By Lemma
\ref{lemmeigenvector}, $x_{0},$ $x_{-1}$ and $x_{1}\in \langle
x,h_{0}\rangle.$  Consequently, $L_{0}\subset L $. The
irreducibility of $L_{-1}$ and $L_{1}$ ensures $L_{-1}+L_{1}\subset \langle x,h_{0}\rangle.$ By Lemma
\ref{lem-Cartan-component}(3), the claim holds. The proof is complete.
\end{proof}

 Theorems \ref{th-classical} and \ref{th-main-cartan} combine to the main result of this paper:
\begin{theorem}
Any simple Lie superalgebra is generated by 2 elements.
\end{theorem}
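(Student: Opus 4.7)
The plan is simply to invoke Kac's classification theorem together with the two main theorems already established in the paper. By Kac's classification (cited in the introduction), every finite-dimensional simple Lie superalgebra over $\mathbb{F}$ is either a simple Lie algebra, a classical Lie superalgebra, or a Cartan Lie superalgebra. Thus it suffices to handle these three cases.

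For the classical case I would directly quote Theorem \ref{th-classical}, which asserts that any classical Lie superalgebra is generated by 2 elements. For the Cartan case I would directly quote Theorem \ref{th-main-cartan}. The only remaining possibility is that the simple Lie superalgebra in question is in fact a simple Lie algebra; here I would cite the classical results mentioned in the introduction, namely Kuranashi \cite{GKM} (or, more strongly, Ionescu \cite{GIT}), which guarantee that a simple Lie algebra in characteristic $0$ is already generated by $2$ elements. Bois \cite{MBJ} gives a uniform modern proof that even works in arbitrary characteristic $p \neq 2,3$, and it is in fact used throughout the preceding sections of the paper (see Lemmas \ref{lem1142lt}, \ref{lem-semi-simple14} and Corollary \ref{lem-simple roots}).

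There is essentially no obstacle to the proof beyond the three-case exhaustion: all the work has been absorbed into Theorems \ref{th-classical} and \ref{th-main-cartan}, whose verification in turn consumed the bulk of Sections 1 and 2. The only care needed is to make sure the classification has been applied correctly, i.e.\ that no sporadic simple Lie superalgebra has been overlooked; this is handled by Proposition \ref{pro-even-part-reductive} (which characterises the classical case via reductivity of $L_{\bar 0}$) together with the explicit enumeration of Cartan Lie superalgebras $W(n)$, $S(n)$, $\widetilde{S}(2m)$, $H(n)$ at the start of Section 2. Hence the proof is a one-line combination: choose the appropriate previous theorem according to which branch of the Kac classification $L$ belongs to, and the conclusion follows immediately.
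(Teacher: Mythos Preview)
Your proposal is correct and matches the paper's own proof, which is literally the one-line statement that Theorems \ref{th-classical} and \ref{th-main-cartan} combine (via Kac's classification) to give the result. The only minor difference is that you explicitly treat the case where $L$ is a simple Lie algebra by citing \cite{GKM,GIT,MBJ}, whereas the paper tacitly excludes this case (cf.\ the parenthetical ``excluding simple Lie algebras'' in the introduction); your extra care here is harmless and arguably makes the argument more complete.
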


\end{document}